\pdfoutput=1
\documentclass[12pt, draftclsnofoot, onecolumn]{IEEEtran}

\usepackage{bbm}
\usepackage{amsthm}
\usepackage{amsmath}
\usepackage{amsfonts}
\usepackage{amsmath,amsfonts,amssymb,amsbsy, amsthm,ae,aecompl}
\usepackage{algorithm,algpseudocode}
\usepackage[english]{babel}
\usepackage{bm}
\usepackage{color}
\usepackage[noadjust]{cite}
\usepackage{epsfig}
\usepackage{enumerate}
\usepackage{float} 
\usepackage{fancyhdr}
\usepackage[T1]{fontenc}
\usepackage[acronym,toc,shortcuts]{glossaries}
\usepackage{graphicx, caption, subcaption}
\usepackage{hyperref}
\usepackage{lastpage}
\usepackage{listings}
\usepackage{lipsum}
\usepackage{dsfont}
\usepackage{multirow,tabularx}
\usepackage[normalem]{ulem}
\usepackage{tikz,pgfplots}
\usepackage{times}
\usepackage{verbatim}
\usepackage[all]{xy}
\usepackage{mathtools}
\usepackage{tikz}
\usepackage{tikz-qtree,tikz-qtree-compat}
\usepackage{pgfplots}

\usepackage{lipsum}
\usetikzlibrary{calc}
\usetikzlibrary{patterns}

\usepackage{enumitem}
\graphicspath{{figures/}}

\usetikzlibrary{arrows,shapes}
\newcommand{\E}[1]{{\mathbb E}\left[ #1 \right]}

\newtheorem{Theorem}{Theorem}

\hypersetup{
	colorlinks,%
	citecolor=black,%
	filecolor=black,%
	linkcolor=black,%
	urlcolor=black
}

\pgfplotsset{
	grid style = {
		dash pattern = on 0.025mm off 0.95mm on 0.025mm off 0mm, 
		line cap = round,
		black,
		line width = 0.5pt
	},
	tick label style={font=\small},
	label style={font=\small},
	legend style={font=\footnotesize},
}

\pgfplotscreateplotcyclelist{laneas1}{
	cyan!60!black,		solid, 	every mark/.append style={fill=cyan!60!black},mark=x\\%
	cyan!60!black,		solid, 	every mark/.append style={fill=cyan!60!black},mark=+\\%
	cyan!60!black,		solid, 	every mark/.append style={fill=cyan!60!black},mark=o\\%
	red!80!black,       dashed\\%
	cyan!60!black, 		densely dotted,	every mark/.append style={fill=cyan!80!black},mark=diamond*\\%
}

\hyphenation{op-tical net-works semi-conduc-tor}

\newtheorem{proposition}{Proposition}

\newtheorem{assumption}{Assumption}
\newtheorem{remark}{Remark}

\begin{document}

	\title{{  Heterogeneous Doppler Spread-based CSI Estimation Planning for TDD  Massive MIMO }}
	
	\author{\IEEEauthorblockN{$\text{Salah Eddine Hajri}^*, \text{Maialen Larra\~naga}^*, 
			̃ \text{Mohamad Assaad}^* $\\}
		\IEEEauthorblockA{*TCL Chair on 5G, Laboratoire des Signaux et Systemes (L2S, CNRS), CentraleSupelec, 91190	Gif-sur-Yvette, France\\
	$\{$Salaheddine.hajri,\;Maialen.Larranaga,\; Mohamad.Assaad$\}@$centralesupelec.fr}

		\thanks{Part of this work has been submitted to the  25th  international conference on Telecommunication \cite{Conf_version}.}
		
	}
	\maketitle	
	\begin{abstract}

Massive  multi-input  multi-output  (Massive MIMO) has been recognized as a key technology to meet the demand for higher data capacity and massive connectivity. Nevertheless, the number of active users is restricted due to training overhead and the limited coherence time. Current wireless systems assume the same coherence slot duration for all users, regardless of their heterogeneous Doppler spreads. In this paper, we exploit this neglected degree of freedom in addressing the training overhead bottleneck. We propose a new uplink training scheme where the periodicity of pilot transmission differs among users based on their actual channel coherence times. Since the changes in the wireless channel are, primarily, due to movement, uplink training decisions are optimized, over long time periods, while considering the evolution of the users channels and locations. Owing to the different rates of the wireless channel and location evolution, a two time scale control problem is formulated. In the fast time scale, an optimal training policy is derived by choosing which users are requested to send their pilots. In the slow time scale, location estimation decisions are optimized. Simulation results show that the derived training policies provide a considerable improvement of the cumulative average spectral efficiency even with partial location knowledge.

	\end{abstract}

\begin{IEEEkeywords}
	Massive MIMO,  Doppler spread, CSI estimation planning, Machine learning  
\end{IEEEkeywords}
	\IEEEpeerreviewmaketitle
	
	\pagebreak

	\section{Introduction}\label{sec:intro}

	Future wireless networks have to address an exponentially increasing demand for high data-rate.  In this context, several technologies have been proposed to improve the overall wireless networks performance: dense small-cell deployment, millimeter-wave communications and massive MIMO among others \cite{gupta2015survey}. 
	Massive MIMO  was identified as one of the most promising technologies to meet this requirement. Originally introduced by Marzetta \cite{Noncooperative}, massive MIMO  exploits a large number of  base station (BS) antennas in order to  enable  the  spatial  multiplexing  of  a  large  number  of devices. 
	By coherent processing of the signals over the BS antennas, transmit precoding  can  be  used  in order to concentrate  each  signal at  its  intended  terminal  and  receive  combining  can  be  used in order to discriminate between the signals of different users. Massive MIMO have been thoroughly studied and have  shown to improve the networks spectral efficiency (SE) and energy efficiency (EE) in addition to providing a  high capacity per area \cite{EEMIMO}. These gains are conditioned by an accurate channel state information (CSI) at the BSs. In this paper we will focus on Time Division Duplexing (TDD) systems, where CSI can be acquired using uplink training with orthogonal pilot sequences \cite{pilot_reduction1}. A major issue in TDD systems is that a number of these pilot sequences are reused resulting in {\it pilot contamination} \cite{Noncooperative, pilot_reduction1, pilot_reduction2}.  Another reason for  CSI inaccuracy is {\it channel aging}. This phenomenon results from  the variation of the channel between the instant when it is learned  and the instant when it is used for signal processing. This   time variation is  due  to  users mobility and processing delays at the BS.
	
	Performance degradation due  to channel aging was studied in a MIMO system with  coordinated  multi-point transmission/reception (CoMP) in \cite{aging1}. The authors showed  that  the impact of channel  aging is  mitigated  when	utilizing channel prediction filters  in the low  mobility  regime. The authors in Truong et al. \cite{aging2} provide an analysis of rate performance  in the presence  of channel aging and prediction.  They showed that, although channel aging  leads to degradation in the performance of massive MIMO systems,  channel prediction can overcome  this issue.	In  Papazafeiropoulos et al. \cite{aging3,aging4}, the effect of channel aging combined with channel prediction has been investigated in scenarios with  regularized Zero Forcing (ZF) precoders and minimum-mean-square-error (MMSE) receivers, respectively.  In Kong et al. \cite{aging5},  lower bounds of the sum-rate for both Maximum Ratio Combining (MRC) and ZF receivers with/without channel prediction have been derived with an  arbitrary number of BS antennas  and  users. The impact of channel aging  and  prediction on the power scaling law has been studied. The authors demonstrated that the transmit power scaling is not affected neither by  aged CSI nor  channel prediction.

	Channel aging  can also  be leveraged in order to optimize uplink  training. In  Vu et al. \cite{icc},   two spectral-efficient multiuser models for massive MIMO systems have been proposed. The main idea comes from the  observation that  users with low velocity are not required to send  training sequences with the same periodicity as faster moving users owing to the resulting heterogeneous coherence times. The  two proposed models proved to  achieve significant SE gains. 
	
	In this paper, we  aim  at  increasing the  SE  by exploiting  the heterogeneous  channel  aging  among users.  In the current  literature, the number of   scheduled users,  is limited by the fixed length  of the uplink training reference signal. A more appropriate approach would be to  define the  needed training resources dynamically, at each time slot. We aim at  adapting uplink training based on the actual  coherence times. This means that, at a given slot, if the correlation between the  estimated CSI  and  the  actual  channel  was not  considerably degraded, due to aging, the network  is not  required to reestimate it. Doing so enables to spear part of the  training resources that can  be used for data transmission or to schedule more users.  This is in accordance with  the concept of dynamic TDD that is already considered in the development  of the 5G standard \cite{qualcomm}.
	
	Channel  aging  results, primarily, from mobility, with speed being an important  parameter. Consequently, developing an uplink  training policy that takes into  consideration the  second order channel statistics  is of paramount importance. Developing such policy  requires accurate estimates of user locations, which can be rather complicated to  obtain, in practice. In fact, localizing all scheduled users requires non negligible signaling, if it is done through the localization capabilities of the network (OTDOA \cite{4G} for example).   Global Positioning System (GPS) can also be used  but it rises the problem  of the  life span of mobile devices batteries \cite{GPS_energy}. Consequently, we  suppose that  the network is able  to  estimate the location  of a limited set of users. Adapting to  the change in the large-scale fading coefficients and optimizing uplink  training decisions based on the  channel's autocorrelation should occur on  two different  time scales \cite{two_time_scale}. In fact the  two  optimizations are based on   information that change  over heterogeneous  time scales.  In order to achieve  the maximum  cumulative average SE  over time spans larger than  the large-scale fading coherence block, a two time  scale control problem  is considered. 
	
	In the fast time scale, an optimal training policy is derived. By taking into consideration the evolution   over time  of the correlation between the  estimated  CSI and the actual  channel,  the network is able  to  optimize its decisions to  schedule  users  for  uplink training   over  a finite time horizon. Taking into  consideration the time dimension allows the network to be more efficient since it becomes able to predict the impact of its decisions on long term performance. Deriving such policy can naturally be formulated as a discrete planning problem over a finite time horizon \cite{finite_hor}. The optimal  training  decisions  are derived for  a predefined time  duration, denoted here by $H$,  for  which the  large-scale  fading coefficients are  supposed to  be  constant. This is quite advantageous since it allows to  optimize training over time without requiring the  actual  channel estimates. Results prove that the  derived training policy  provides substantial performance increase. Since deriving the optimal  policy can be  computationally  prohibitive for  large optimization  horizons,  we provide a combinatorial optimization framework that  enables to  derive an approximate training policy with reduced  running time.

	In  the slow time scale, the network adapts to user mobility by deciding which users are required to  feedback their locations.  Estimating the exact location of all users requires a non negligible signaling overhead. Consequently,  efficiently selecting the  users that are required to feedback their location is important. Since locations are estimated in a periodic manner, we consider user locations that evolve  according to independent Markovian stochastic processes \cite{markov_model}. The location estimation problem introduced above, with locations evolving in a Markovian fashion, can be formulated as  a Partially Observable Markov Decision Process (POMDP)\cite{POMDP}.  Simulations prove that the combined optimization, on the two time scales,  provides an efficient training  strategy that improves the  achievable cumulative average SE even  with  partially erroneous geolocalisation.
	
	
	This paper is organized as follows. We describe the considered system model in Section II.  We discuss the advantages of coherence time  based training  in Section III. Two time scale training strategy learning is discussed in Section IV. Finally, in Section  V, numerical results are presented.

\section{System Model And Preliminaries}	\label{sec:model}

We consider  the uplink of a multi-cell multiuser massive MIMO system constituted of $C$ macro BSs operating in TDD mode.
Each macro BS is equipped with  $M$ omnidirectional antennas  and  serves $K$ mobile devices equipped, each,  with a	single omnidirectional antenna.  We will refer to the latter as users. All users in the network move according to different speeds and  directions. Consequently, their  signals are  subject to    heterogeneous Doppler spreads which  results in  different wireless channel autocorrelations  in time. We consider a system where  time  is slotted $t\in\{0,1,\ldots\}$ and the duration of each time slot $t$ is given by $D_c$. We note that $D_c$ is the channel  coherence time which depends on the maximum Doppler spread supported by the network, see for instance Toufik et al. \cite{4G}. We also  consider  the corresponding coherence  interval $T_s$. The wireless channel  of each user can be  decomposed as a product of   small and  large scale fading coefficients. The wireless channel from user $k$ (in cell $c$) to BS $j$,  at time slot $t$, i.e., $g^{[j]}_{ kc }(t)$, is given by
\begin{align}\label{eq:channel1}
& g^{[j]}_{ kc }(t) = 
\sqrt{\beta^{[j]}_{ kc}} h^{[j]}_{kc}(t), \text{ for all } k=1,\ldots, K, \text{ and } 
j,c=1,\ldots,C, 
\end{align}
where $h^{[j]}_{ kc }(t) \in \mathbb{C}^{ M \times 1}$ is the  fast
fading vector, $h^{[j]}_{ kc }(t) \sim \mathcal{C}\mathcal{N}(0,I_M)$. ${\beta^{[j]}_{ kc}} \in \mathbb R^+$ models the large-scale effect including shadowing and pathloss, which are assumed to remain constant  during large-scale coherence blocks of $T_{\beta}$ OFDM  symbols. 
\begin{remark}
	In Sections~IV and~V we will consider a system where $\beta^{[j]}_{kc}$ evolves according to a Markovian model.
\end{remark}	
\subsection{Channel  Estimation}

As introduced above, in this paper we focus on a TDD system, where the entire frequency band is used for downlink and uplink transmission by all BSs and users. 
The BSs acquire CSI  estimates  using orthonormal training sequences (i.e., pilot sequences) in the uplink. We consider a pilot reuse factor of $1$, i.e.  the same sets of pilot sequences are used in all cells.

We also consider that, during each  coherence  interval, a maximum of  $\tau$ users are scheduled for uplink training in each cell with $\tau \leq K$. For that,  we consider a set of orthonormal training sequences, that is, sequences $q_i\in \mathbb C^{\tau \times 1}$ such that $ q^{\dagger}_i q_j =\delta_{ij} $ (with $\delta_{ij}$ the Kronecker delta). 

During  uplink training  of slot $t$, the $l^{th}$ BS receives the pilot signal $	Y^{[l]}_p(t) \in \mathbb{C}^{ M \times \tau}$
\begin{align}\label{eq:training_signal}
Y^{[l]}_p(t) &=  \sum_{c=1}^{C} \sum_{k=1}^{\tau} \sqrt{P_p} g^{[l]}_{ kc } (t) q^\dagger_k + W_p(t), 
\end{align}
where $W_p(t)  \in \mathbb{C}^{ M \times \tau} $ refers to  an additive  white  Gaussian  noise matrix with i.i.d. $\mathcal{C}\mathcal{N}(0,1)$ entries. $P_p$ refers to  the training  signal power.
The  $l^{th}$ BS  then uses the  orthogonality of  training sequences in order to  obtain  the MMSE  estimate of the  channel of user $k,l$ \cite{pilot_reduction2} as
\begin{align}\label{eq:channel_estimation}
\hat{g}_{kl}^{[l]}(t) =   \frac{\beta^{[l]}_{ kl}}{\frac{1}{P_p}+\sum_{b, b\neq l}^{C} \beta^{[l]}_{ kb}}    \frac{Y^{[l]}_p(t)}{\sqrt{P_p}} q_k .
\end{align}
Note that the MMSE channel estimate $\hat{g}^{[l]}_{ kl } (t) $  follows a  $\mathcal{C}\mathcal{N} \left(0,\frac{\beta^{[l]^2}_{ kl}}{\frac{1}{P_p}+\sum_{b, b\neq l}^{C} \beta^{[l]}_{ kb}} I_M \right) $ distribution. The  wireless channel between user $k$ (in cell $l$) and BS $l$ can then be decomposed as follows
\begin{align}\label{eq:channel_error_estimation}
g_{kl}^{[l]}(t) = \hat g_{kl}^{[l]}(t)+\tilde g_{kl}^{[l]}(t),
\end{align}
where $\tilde g_{kl}^{[l]}(t)$ represents the  estimation error  and  follows a $ \mathcal{C}\mathcal{N} \left(0,\left(\beta^{[l]}_{ kl}- \frac{\beta^{[l]^2}_{ kl}}{\frac{1}{P_p}+\sum_{b, b\neq l}^{C} \beta^{[l]}_{ kb}}\right) I_M \right) $ distribution.
Moreover, $\hat{g}^{[l]}_{ kl } (t) $ and $\tilde{g}^{[l]}_{ kl } (t)$ are independent \cite{pilot_reduction2}.

\subsection{Channel  aging}		

In practice, the wireless channel varies between the time when it is learned and used for precoding in downlink and decoding in uplink. This variation  is due mainly to  user movement and processing delays. Such phenomenon is referred to as {\it channel aging}. 
Its impact can be captured  by a time varying wireless channel model. To this end we consider a stationary ergodic Gauss-Markov block fading regular process (or auto-regressive model of order~1) \cite{jakes}. The evolution of the channel vector of user $k,l$ between the two slots $t$ and $t-1$ is expressed as
\begin{align}\label{eq:channel_evolution}
&  g^{[l]}_{ kl } (t) = 
\rho_{ kl }^{[l]} {g}^{[l]}_{ kl} (t-1) 	 + \sqrt{\beta^{[l]}_{kl}} \varepsilon^{[l]}_{ kl }(t), 
\end{align}
where  $ \varepsilon^{[l]}_{ kl }(t) $  denotes a  temporally uncorrelated complex white Gaussian noise process with zero mean and  variance  $(1- \rho^{[l]^2}_{ kl }) I_M$.
$\rho^{[l]}_{ kl }$  represents a temporal correlation parameter of the channel of user $k,l$. This parameter is given by Jakes et al. \cite{jakes} and reads as follows
\begin{align}\label{eq:rho}
& \rho^{[j]}_{ kl } = J_0(2\pi f^{[j]}_{ kl } D_c),
\end{align}
where $J_0(\cdot)$ is the zeroth-order Bessel function of the first kind and  $f^{[j]}_{ kl }$  represents the maximum Doppler shift of user $k$ in cell $l$ with respect to the antennas of BS $j$. In our work,  we adopt a realistic setting in which, mobile users have different frequency shifts  since we consider heterogeneous movement velocities and directions. For every user $k$ in cell $l$, the maximum Doppler shift with respect to the antennas of BS $j$ is given by
\begin{align}\label{eq:Doppler}
& f^{[j]}_{ kl } = \frac{\nu_{kl}f_{\textit{c}}}{\textit{c}} \text{cos}(\theta^{[j]}_{kl}), 
\end{align}
where $\nu_{kl}$ is the velocity of user $k$ in cell $l$ in meters per seconds, $\textit{c}= 3 \times 10^8 \; \text{mps} $ is the speed of light,  $f_{\textit{c}}$ is the carrier frequency and $\theta^{[j]}_{kl}$ represents the angular difference between the directions of   the mobile device  movement and the incident wave. 
Taking into consideration the combined effects of estimation error and impairments due to channel aging, we can express the wireless channel  of user $k,l$ at time $t$ as 
\begin{align}\label{eq:channel_decomp}
&  g^{[l]}_{ kl } (t) = 
\rho_{ kl }^{[l]} \hat{g}^{[l]}_{ kl } (t-1) +	\rho_{ kl }^{[l]} \tilde{g}^{[l]}_{ kl } (t-1) 	 + \sqrt{\beta^{[l]}_{kl}} \varepsilon^{[l]}_{ kl}(t), 
\end{align}

	\section{An adaptive uplink training approach for Massive MIMO TDD systems}
	In current Massive MIMO models, the same  coherence  interval $T_s$ is considered for  all users.  $T_s$ is defined as a system parameter that  is based on the  maximum  Doppler spread supported by the  network \cite{4G}. This  consideration results in a suboptimal  use of the  time-frequency resources and a loss of flexibility that can be leveraged otherwise. In fact, in practice, users experience heterogeneous Doppler spreads. Consequently, their channels do not age at the  same  rate. Forcing all users  to  perform uplink training  with the same  periodicity causes  vain redundancy and a loss of resources. A more  efficient approach should adapt the  periodicity of each user CSI  estimation  according to its actual  coherence  time \cite{Conf_version} \cite{icc}. This means that, at a given slot, if the correlation between the  estimated CSI  and  the  actual  channel  was not  considerably degraded,  the network  is not  required to reestimate it. Doing so enables to spear part of the  training resources that can  be used for data transmission or to schedule more users. In all cases, the latter results in  an increase in   SE. In this section, we present a novel approach for uplink training   that leverages the users heterogeneous  channel coherence times. We present  a detailed analysis of its impact  on the  achievable SE  with  MRC receivers.  We also derive an important  condition which  ensures that the proposed scheme is able to improve performance. 
	
	\subsection{An adaptive coherence time-based uplink training scheme}
	We consider a massive MIMO system in which  CSI estimation is adapted according  to  the  actual  users'  coherence times. We consider that  the network groups users according to their channel autocorrelation coefficients into $N_G$ copilot user groups $\lambda_g, \; g=1,...,N_G$. The users in each group  are either scheduled for uplink training synchronously, using the same pilot sequence, or not scheduled at all. This requirement guarantees that copilot users always have the same CSI delay and a similar channel  aging effect. For each  copilot  group $\lambda_g, \; g=1,...,N_G$,  the CSI delays are denoted by  $d_g, g=1,...,N_G$. At each  slot, all $N_G$ copilot user groups  are scheduled for data transmission and a maximum of  $\tau \;(\tau<N_G)$ copilot groups are selected for uplink training. The  rest will  have their signals processed using  the last estimated version of their CSI.
	The proposed  Time-Division  Duplexing  (TDD)  protocol  consists  of the following seven steps.

	\begin{enumerate}
		
		\item In the beginning of each  large-scale coherence block, the BSs estimate the large scale fading  and  channel  autocorrelation coefficients, i.e., $\beta_{kc}^{[j]}$ and $\rho_{kc}^{[j]}$ for all $k=1,\ldots,K$, and $c,j=1,\ldots,C$. All coefficients are then fed back to a central  processing unit (CPU).
		
		\item Next, the CP clusters users according to their autocorrelation coefficients using  $K$-mean, see Young et al~\cite{mean}. 
		The resulting clusters will  be characterized by an average autocorrelation coefficient or, equivalently, an average Doppler spread and a variance of the  corresponding users autocorrelation coefficients.
		The considered number of clusters is $N_c$. Defining  $N_c$  is of paramount importance. In this work, we choose to define $N_c$ according to
		\begin{align}\label{eq:cluster_num}
		N_c= \lceil\frac{D_{max}}{D_c}\rceil,
		\end{align}
		where $D_{max}$ represents the maximum coherence time. $(9)$ guarantees that the average coherence time per cluster is approximately equivalent to a multiple of  $D_c$. This is needed in order to  appropriately define CSI  estimation periodicity  as a function of  the  parameter $D_c$.

		\item Next, the  CP allocates all users in  the network ($K$ per cell) to $N_G$ copilot groups. Each  group contains at maximum $C$ users from the same channel autocorrelation based cluster and from different cells.  These $N_G$ copilot groups are formed with  minimum  variance of the autocorrelation coefficients in each  group. This  guarantees that  copilot users has similar channel aging impact. The justification for this grouping is discussed in  Section III.C. 

		\item At each coherence  slot, the network  schedules at maximum $\tau$ copilot  groups  for uplink training synchronously. Depending on the main key performance indicator (KPI) to  optimize,   different scheduling algorithms can be  used to select these copilot  groups.	In this  paper, we  propose  a scheduling algorithm  that  exploits the  aforementioned user grouping in order to derive an optimal CSI estimation policy. This is  the focus of  Section IV  and represents one of the main contributions of the  present work.
		\item All  $N_G$  copilot  groups  transmit their  uplink signal in  a synchronous manner.
		\item  The BSs process the received pilot signal and estimates the channels of the  active users during uplink training using MMSE estimators. The BSs decode and precode the uplink and downlink  data signals, respectively, using the last estimated version of each  user CSI.
		\item All BSs synchronously transmit  downlink data signals to the $N_G$  copilot  groups. 
	\end{enumerate}

	\subsection{Spectral efficiency with  outdated CSI}

	In what follows,  we analyze the impact of  the aforementioned training procedure on the achievable SE with a MRC receiver. We also explain why an adaptive Doppler based training can be more efficient.  Moreover, we provide a condition in order to ensure that the   spectral  efficiency of  all users is improved when the aforementioned training procedure is used. For the sake of analytical traceability, we consider that the $N_G$ copilot groups contain exactly $C$ users.	We, henceforth, refer to each user  by its copilot group and serving BS indexes. During uplink data transmission, at time slot $t$,  BS $l$ receives the data signal $Y^{[l]}_u(t)$ which is given by					
	\begin{align}\label{eq:uplink_signal_data}
	Y^{[l]}_u(t) =  \sum_{c=1}^{C} \sum_{k=1}^{N_G} \sqrt{P_u} g^{[l]}_{ kc } (t) S_{kc} +W_u(t), 
	\end{align}
	where  $W_u(t) \sim CN(0,I_M)$ is the additive noise, $S_{kc}$ denotes the  uplink signal of  user $k,c,\;k=1,\ldots,N_G,\; c=1,\ldots,C$  and $P_u$ denotes  the reverse link  transmit power. Each BS applies a MRC receiver based on the  latest available  CSI  estimates.  BS  $l, l=1,\ldots,C$ detects the signal of user $g, g=1,\ldots,N_G$, within the same cell, by applying the following
	
	\begin{align}\label{eq:mf}
	& u_{gl}(t)= \frac{\hat{g}^{[l]}_{ gl } (t-d_{g}) }{\lVert \hat{g}^{[l]}_{ gl } (t-d_{g})  \rVert}, t\geq d_{g}, 
	\end{align}
	where $\hat{g}^{[l]}_{ gl } (t-d_{g})$ denotes the latest available channel estimate  of user $g$ in cell $l$.
	The resulting average achievable SE  in the system with MRC receivers is given in  Theorem $1$.

	\begin{Theorem}
		
		For $N_G$ active copilot groups,  $\tau$  of which are scheduled for uplink training and using  a MRC receiver $u_{gl}(t)$ that is based on the latest available CSI estimates of each user $g,l$, the average achievable spectral efficiency in the uplink $\bar{R}^{MRC}_u $ is lower bounded by:
		
		\begin{align}\label{eq:average_spectral}
		& \bar{R}^{MRC}_u \geq \sum_{l=1}^{C} \sum_{g=1}^{N_G}
		\left( 1-\frac{\tau}{T_s}\right) \text{log} \left(1+ \frac{(M-1)\beta^{[l]^2}_{gl} \rho^{[l]^{2 d_g}}_{ gl } }{(M-1) \times I^p_{gl} +  I^n_{gl}  } \right), 
		\end{align}
		where $d_g, g=1...N_G$ represents the copilot groups CSI  delays. $I^p_{gl}$ and $I^n_{gl}$ are given by:
		\begin{align}\label{eq:interference}
		& I^p_{gl} = \sum_{ c\neq l}^{C} \rho^{[l]^{2 d_g}}_{ gc } \beta^{[l]^2}_{gc}, \\
		& I^n_{gl}= (\sum_{c=1}^{C} \sum_{k\neq g}^{N_G} \beta^{[l]}_{kc} +  \sum_{ c=1}^{C} ( \beta^{[l]}_{gc} -    \rho^{[l]^{2 d_g}}_{ gc }\frac{ \beta^{[l]^2}_{gc}}{\frac{1}{P_{p}}+\sum_{b=1}^{C}  \beta^{[l]}_{gb} }) + \frac{1}{P_u})\times (\frac{1}{P_{p}}+ \sum_{b=1}^{C} \beta^{[l]}_{gb}  ).  \nonumber
		\end{align}	
	\end{Theorem}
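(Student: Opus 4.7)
The plan is to derive the bound via a Jensen lower bound applied to the SINR conditioned on the aged MMSE estimate $\hat g^{[l]}_{gl}(t-d_g)$ that drives the MRC filter. First, I would iterate the Gauss--Markov recursion~\eqref{eq:channel_evolution} $d_g$ times and combine with the MMSE decomposition~\eqref{eq:channel_error_estimation} to obtain the three-way independent decomposition
\begin{equation*}
g^{[l]}_{kl}(t)=(\rho^{[l]}_{kl})^{d_g}\hat g^{[l]}_{kl}(t-d_g)+(\rho^{[l]}_{kl})^{d_g}\tilde g^{[l]}_{kl}(t-d_g)+\xi^{[l]}_{kl}(t),
\end{equation*}
with $\xi^{[l]}_{kl}(t)\sim\mathcal{CN}(0,\beta^{[l]}_{kl}(1-\rho^{[l]^{2d_g}}_{kl})I_M)$ independent of every $(t-d_g)$-slot quantity, and the three summands mutually independent by the orthogonality property of MMSE estimation.

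Next I would insert this decomposition into~\eqref{eq:uplink_signal_data}, apply the MRC receiver~\eqref{eq:mf}, and split the output for user $g,l$ into a coherent desired part $\sqrt{P_u}(\rho^{[l]}_{gl})^{d_g}\|\hat g^{[l]}_{gl}(t-d_g)\|\,S_{gl}$, a pilot-contamination part from the other cells of the same copilot group ($k=g,\,c\neq l$), a multi-user interference part from non-copilot groups ($k\neq g$), and filtered noise. The crucial identity is $\hat g^{[l]}_{gc}(t-d_g)=(\beta^{[l]}_{gc}/\beta^{[l]}_{gl})\,\hat g^{[l]}_{gl}(t-d_g)$, which follows directly from~\eqref{eq:channel_estimation} because all $C$ copilot estimates at BS~$l$ come from the same received pilot signal; this forces the pilot-contamination contribution to be coherent with the desired signal and to scale as $(\rho^{[l]}_{gc})^{d_g}(\beta^{[l]}_{gc}/\beta^{[l]}_{gl})\|\hat g^{[l]}_{gl}(t-d_g)\|$, plus incoherent residuals with variance $\beta^{[l]}_{gc}-(\rho^{[l]}_{gc})^{2d_g}\beta^{[l]^2}_{gc}/(\tfrac{1}{P_p}+\sum_b\beta^{[l]}_{gb})$. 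Non-copilot users contribute, by pilot orthogonality, independent interference of power $\beta^{[l]}_{kc}$, and the projected noise contributes $1/P_u$.

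Conditioning on $\hat g^{[l]}_{gl}(t-d_g)$, the SINR has numerator proportional to $\|\hat g^{[l]}_{gl}(t-d_g)\|^2$ and a denominator splitting into a coherent pilot-contamination piece (also scaling with $\|\hat g^{[l]}_{gl}(t-d_g)\|^2$) and an incoherent piece deterministic in the conditioning. Applying Jensen's inequality to the convex map $x\mapsto\log(1+1/x)$ yields $E[\log(1+\mathrm{SINR})]\geq\log(1+1/E[1/\mathrm{SINR}])$; the only random quantity left in $E[1/\mathrm{SINR}]$ is $E[\|\hat g^{[l]}_{gl}(t-d_g)\|^{-2}]=((M-1)\beta^{[l]^2}_{gl}/(\tfrac{1}{P_p}+\sum_b\beta^{[l]}_{gb}))^{-1}$, obtained from the standard inverse chi-squared moment for a complex Gaussian of dimension $M$. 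This is exactly where the $(M-1)$ factor in~\eqref{eq:average_spectral} enters. Multiplying top and bottom by $(\tfrac{1}{P_p}+\sum_b\beta^{[l]}_{gb})$ and collecting terms produces $(M-1)\beta^{[l]^2}_{gl}(\rho^{[l]}_{gl})^{2d_g}$ in the numerator, $(M-1)I^p_{gl}$ as the coherent denominator piece and $I^n_{gl}$ as the incoherent one; pre-multiplying by the training-overhead factor $(1-\tau/T_s)$ and summing over $g$ and $l$ yields the claimed bound.

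The main obstacle is justifying cleanly the conditional-independence step in Step~3: the pilot-contamination residuals (MMSE errors and aging noises for $c\neq l$) and the multi-user interference must be shown to be jointly independent of both $\hat g^{[l]}_{gl}(t-d_g)$ and the unit-norm filter $u_{gl}$, so that their contributions to the conditional SINR are deterministic in the conditioning and Jensen's inequality can be invoked in the form above. This rests on carefully tracking which variables are $t$-measurable versus $(t-d_g)$-measurable through the iterated Gauss--Markov step, and on exploiting the orthogonality principle in~\eqref{eq:channel_error_estimation}. Once this decoupling is in place, the inverse chi-squared moment computation and the algebraic rearrangement of the bound into the $(I^p_{gl},I^n_{gl})$ form are routine.
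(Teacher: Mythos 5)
Your proposal is correct and follows essentially the same route as the paper's Appendix~A proof: the same three-part split of the MRC output into coherent desired signal, coherent pilot contamination via the identity $\hat g^{[l]}_{gc}(t-d_g)=(\beta^{[l]}_{gc}/\beta^{[l]}_{gl})\hat g^{[l]}_{gl}(t-d_g)$, and incoherent residuals, followed by conditioning on $\hat g^{[l]}_{gl}(t-d_g)$, Jensen's inequality for $x\mapsto\log(1+1/x)$, and the inverse-Gamma moment of $\|\hat g^{[l]}_{gl}(t-d_g)\|^{2}$ producing the $(M-1)$ factor. The only cosmetic difference is that you collapse the $d_g$ accumulated aging innovations into a single Gaussian $\xi$ of variance $\beta^{[l]}_{gl}(1-\rho^{[l]^{2d_g}}_{gl})$ up front, whereas the paper keeps the sum and evaluates the geometric series later; the result is identical.
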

	\begin{IEEEproof}
		See appendix A.
	\end{IEEEproof}

	Equation~\eqref{eq:average_spectral} provides further insights into the impact of  channel aging on the achievable average SE as a function of the CSI time offset. We can clearly see that the  SE   decreases as a function of its CSI time offset. This is an intuitive result since the correlation between the estimated CSI and the actual channel  fades over time. Equation~\eqref{eq:average_spectral} shows also that for a same CSI time offset, the degradation due to channel aging  is  higher for  users with  lower autocorrelation coefficients. 
	Although  outdated CSI  causes an SINR degradation, the speared resources from  uplink training can lead to  an increase in  SE. 
	\subsection{ASYMPTOTIC Performance}
	
	We now analyze  the potential gain that the proposed training approach can provide. To do so,  we compare it with a reference model that  follows a   classical TDD protocol in which all of $N_G$ copilot groups are scheduled for uplink training at each time slot. We consider a worst case scenario with random delays and random copilot groups allocation. In this scenario,  each user experiences the lowest channel autocorrelation coefficient  in comparison with its copilot users. This means that  each user suffers from the heaviest channel aging impact in its copilot group. 
	
	\begin{Theorem}
		In the asymptotic regime ($M$ grows large), with  $\bar{\rho}^{[{min}]}_{g}$ and $\bar{\rho}^{[{max}]}_{{g}}$ denoting, respectively, the minimum and maximum  autocorrelation  coefficients in copilot group $g,\; g=1,...,N_G$,  the proposed training framework enables to improve the SE of each user when~\eqref{eq:cond} is satisfied
		\begin{align}\label{eq:cond}
		\left( \frac{\bar{\rho}^{[{min}]^2}_{g}}{\bar{\rho}^{[{max}]^2}_{{g}} }\right)^{d_g} \geq \frac{ \left( 1+\text{ SINR}^{[\infty]}_{g,l}\right)^{\frac{T_s-N_G}{T_s-\tau}}-1}{\text{ SINR}^{[\infty]}_{g,l}}, 
		\end{align}
		with
		\begin{align}\label{eq:cond_sinr}
		\text{ SINR}^{[\infty]}_{g,l}= \frac{\beta^{[l]^2}_{gl}  }{	\sum_{ c\neq l}^{}  \beta^{[l]^2}_{gc}},  
		\end{align}	
	\end{Theorem}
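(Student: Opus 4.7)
The plan is to compare the asymptotic ($M \to \infty$) per-user spectral efficiency of the proposed scheme against that of a classical TDD reference model in which every copilot group is trained at every slot (i.e.\ $d_g = 0$ for all $g$ and the training overhead is $N_G/T_s$), and then to translate the inequality ``proposed $\geq$ reference'' into the stated condition.

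First, I would instantiate the lower bound of Theorem~1 in both settings. In the reference model, substituting $d_g=0$ and replacing $\tau$ by $N_G$ in \eqref{eq:average_spectral}, the per-user term inside the sum becomes $(1-N_G/T_s)\log(1 + \mathrm{SINR}_{g,l})$ with $\mathrm{SINR}_{g,l}$ evaluated at zero delay. In the proposed model I keep the general $d_g$ and $\tau$. Then I would let $M\to\infty$: the factor $(M-1)$ in front of $I^p_{gl}$ in the denominator of \eqref{eq:average_spectral} swamps the bounded term $I^n_{gl}$, so the asymptotic SINR reduces to
\begin{equation*}
\mathrm{SINR}^{[\infty]}_{g,l}(d_g) \;=\; \frac{\beta^{[l]^2}_{gl}\,\rho^{[l]^{2 d_g}}_{gl}}{\sum_{c\neq l}\rho^{[l]^{2d_g}}_{gc}\,\beta^{[l]^{2}}_{gc}},
\end{equation*}
recovering $\mathrm{SINR}^{[\infty]}_{g,l}$ as defined in \eqref{eq:cond_sinr} when $d_g=0$.

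Next I would implement the worst-case assumption spelled out before the theorem: the served user in cell $l$ has the smallest autocorrelation coefficient of its copilot group ($\rho^{[l]}_{gl}=\bar\rho^{[\min]}_{g}$), while the pilot-contaminating copilot users in the other cells have the largest ($\rho^{[l]}_{gc}=\bar\rho^{[\max]}_{g}$ for $c\neq l$). Substituting in the asymptotic SINR, the autocorrelation factors pull out uniformly, giving
\begin{equation*}
\mathrm{SINR}^{[\infty]}_{g,l}(d_g) \;=\; \left(\frac{\bar\rho^{[\min]^2}_{g}}{\bar\rho^{[\max]^2}_{g}}\right)^{d_g}\,\mathrm{SINR}^{[\infty]}_{g,l}.
\end{equation*}

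Finally, enforcing that the proposed per-user rate dominates the reference one amounts to
\begin{equation*}
(1-\tau/T_s)\,\log\!\bigl(1+r^{d_g}\,\mathrm{SINR}^{[\infty]}_{g,l}\bigr) \;\geq\; (1-N_G/T_s)\,\log\!\bigl(1+\mathrm{SINR}^{[\infty]}_{g,l}\bigr),
\end{equation*}
with $r=\bar\rho^{[\min]^2}_{g}/\bar\rho^{[\max]^2}_{g}$. Dividing through by $(1-\tau/T_s)>0$, exponentiating and rearranging isolates $r^{d_g}$ and yields exactly \eqref{eq:cond}. The routine part is the algebra; the main obstacle I anticipate is bookkeeping-level rather than deep: carefully justifying that only the pilot-contamination term survives the $M\to\infty$ limit in the SINR numerator/denominator of Theorem~1 (so that $I^n_{gl}/M \to 0$ while $I^p_{gl}$ stays bounded away from zero), and making the worst-case labelling of $\bar\rho^{[\min]}$ versus $\bar\rho^{[\max]}$ consistent across the desired signal term and all copilot interferers so that the ratio telescopes cleanly.
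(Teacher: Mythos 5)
Your proposal is correct and follows essentially the same route as the paper: instantiate the Theorem~1 lower bound for both the proposed scheme and the all-trained reference, let $M\to\infty$ so that only the pilot-contamination term survives in the denominator, impose the worst case $\rho^{[l]}_{gl}=\bar\rho^{[\min]}_{g}$ for the desired link and $\rho^{[l]}_{gc}=\bar\rho^{[\max]}_{g}$ for the copilot interferers, and rearrange the rate inequality into \eqref{eq:cond}. The only cosmetic difference is the order of the last two steps (the paper rearranges the inequality before substituting the extreme autocorrelation values, whereas you substitute first), which changes nothing.
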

	\begin{IEEEproof}
		See appendix B.
	\end{IEEEproof}
	Condition~\eqref{eq:cond} ensures that the SE of  each user increases when outdated CSI is  used. Equation~\eqref{eq:cond} shows that the speared resources due  to the reduced training overhead is  a defining  parameter. 
	In fact,  SE is improved as long as the SINR degradation is compensated for by  the spared resources from training. It also shows the importance of  the  ratio between the  minimum and maximum   autocorrelation coefficients in a copilot group. A high  ratio is required in order to  achieve  the needed SE gain. This  requirement become  tighter as the  CSI  time  offset increases. \eqref{eq:cond}  shows that the use of the proposed procedure  can improves the achievable SE even with random delays and  random pilot sequence allocation.

	\begin{remark} 
		In order to  satisfy condition~\eqref{eq:cond},  copilot users  need to have similar  autocorrelation coefficients. This explains  Steps $2)$ and $3)$ in the protocol in Section~III.A. In fact,  clustering  users based on   their autocorrelation  coefficients and grouping them  accordingly  results in   copilot  user groups  with  homogeneous channel  aging  within  each  group. This allows to  tolerate higher CSI  time offset.  \eqref{eq:cond} also shows that the use of the aforementioned training procedure  can improves the achievable SE of the  network, even with random pilot allocation. Consequently, one can do better if a coherence time adaptive scheduling for uplink training  is implemented.	
		More importantly, the proposed scheme shows the impact of the time dimension. This fact justifies the need for a time-aware  training optimization which will be  the focus of the next section.   
	\end{remark}

	\section{ Optimal training strategy with outdated CSI  and  user mobility: a  two-time scale decision process}
	
	We proved that adapting uplink training  periodicity  to the  actual channel coherence time can provide a considerable increase in network performance, even with random pilot sequence allocation.  Nevertheless, higher performance gain can be obtained if more sophisticated and adapted scheduling policy is used. Developing such policy is the focus of this section.
	
	As a matter of fact, knowing that  CSI  estimation periodicity  should depend on the  rate of channel aging, it makes sense to develop  an  uplink training  policy  that  takes into  consideration  the evolution in the difference  between the  estimated CSI and the actual  wireless channels.
	In opposition to a per slot  uplink  training optimizing,   such  policy  enables to  take  into  consideration  the  impact of  past scheduling decisions  on the  long term performance. User mobility should also be included. In fact,  channel  aging  results, primarily, from mobility, with velocity being a defining parameter. Consequently, developing an uplink  training policy that takes into  consideration the  evolution of  large-scale fading coefficients, in addition to channel  aging,  is of paramount importance. Developing such strategy  requires accurate estimates of user locations, which can be rather complicated to  obtain, in practice. As a matter of fact, localizing all covered users requires a non negligible signaling overhead and energy consumption \cite{4G}, \cite{GPS_energy}. Consequently, this problem should be addressed while assuming  a partial  knowledge of the user positions.    
	Adapting to  the change in user locations and optimizing uplink  training decisions based on the  channels' autocorrelation coefficients, should occur on  two different  time scales \cite{two_time_scale}. In fact, the  two  optimizations are based on   information that change  over heterogeneous  time scales (The wireless channel  changes faster than user position). Consequently, a two time  scale control problem should be formulated. This will be the focus of the present section.
	
	\subsection{ Optimizing uplink training: A two-time scale control problem}
	
	We now model  the  two-time scale system introduced above as a POMDP \cite{POMDP}. We  assume finite action and state spaces in both time scales (see Figure~\ref{fig:two-time-scales-model}). 
	\begin{figure}
		\centering
		\includegraphics[width=12cm,height=4cm]{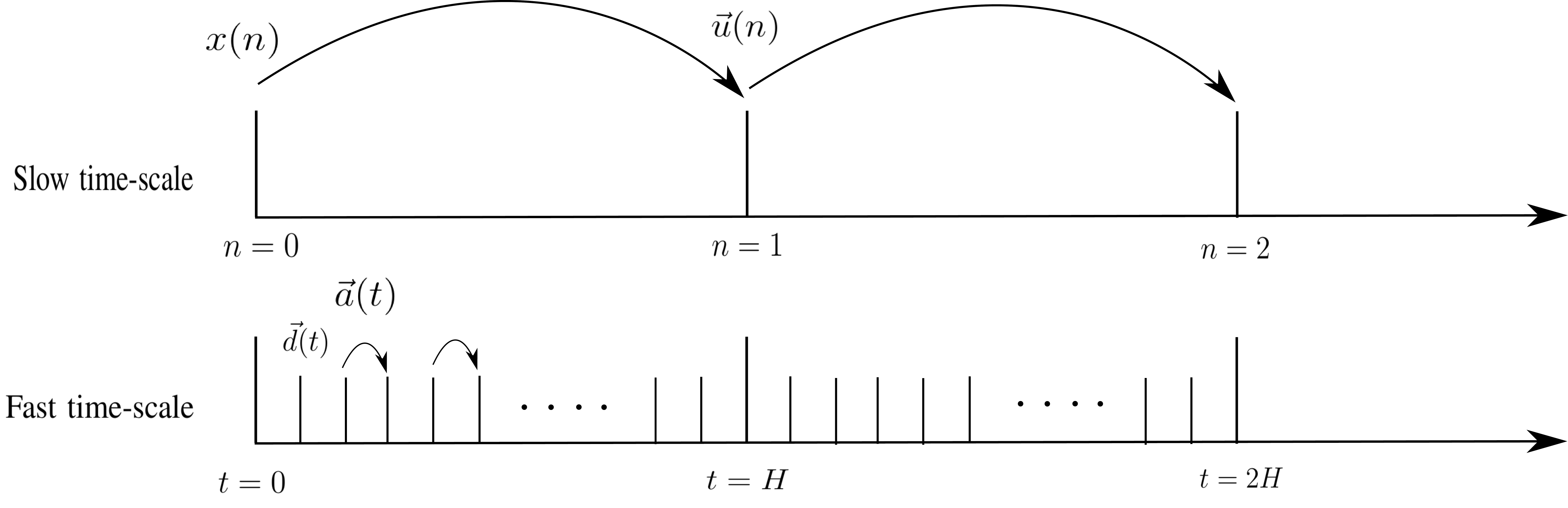}
		\caption{A two time-scale planning problem}\label{fig:two-time-scales-model}
	\end{figure}
	
	We consider that, in the slow-time scale ({\it upper level}), the position of the users evolves according to a Markovian Mobility model \cite{markov_model} within its serving cell. These position variations occur at decision times $n=0,1,\ldots$.	Let $\ell_{g}(n)$ be the combination of the positions of users from copilot group $g$  at time $n$. 
	Considering the combination of  copilot users positions  instead of  each individual one enables to reduce the complexity  of  the present  model. We  assume, for the sake of simplicity,  that all copilot groups have $L$ possible position combinations, hence $\ell_{g}(n)\in\{1,\ldots,L\}$.
	Building this model requires a portioning of the coverage area of each cell into a number of disjoint regions. The area of each  region is chosen such  that  the variation  of the large scale  fading coefficients can be  considered as negligible within the  region.
	For copilot group  $g$, each  position $\ell_{g}(n)\in\{1,\ldots,L\}$ corresponds to a combination of  regions in each cell.
	The transition probabilities are characterized by the matrix
	\begin{align}\label{eq:transition_matrix}
	P_{g}=(p_{g}(i,j))_{i,j\in\{1,\ldots,L\}}, \hbox{  for copilot group } g. 
	\end{align}
	The large scale fading coefficients for user $g$ in cell $l$, i.e.,  $\beta_{gl}^{[j]}, j\in\{1,\ldots,C\}$ depend on the users' position. In previous sections, we assumed that this values were constant. In this section, we add a time dependency to it, namely, $\beta_{gl}^{[j]}(n)=\beta_{gl}^{[j],\ell_{g}(n)}\in\{\beta_{gl}^{[j],1},\ldots,\beta_{gl}^{[j],L}\}$. Acquiring the information on the position of all users can be really expensive in terms of  processing overhead and  energy consumption \cite{4G},\cite{GPS_energy}.
	Consequently, we  consider that a limited number of users can feedback its positions to the network. In particular, we assume that,  in every decision epoch, the users from $U_{max}$ copilot groups can feedback their positions (with $U_{max}<N_G$). The CP therefore can only acquire the positions of the users from $U_{max}$ copilot groups, at each time $n$. The positions of the rest of the  user will be inferred from previous estimations. This estimation is characterized by the belief state vector. The belief state vector of copilot group $g$, at decision-time $n$, will be denoted by $\vec b_{g}(n)$, where the $i^{th}$ entry in  $\vec b_{g}(n)$ refers to the probability that the users of copilot group $g$ are in positions of combination $i$. We define by $\mathcal{X}_{g}$ the set of all belief states for copilot group $g$  and we let $\mathcal{X}=\mathcal{X}_{1}\times\ldots\times\mathcal{X}_{N_G}$ be the state space in the upper level. A remark on the notation is now in order.
	
	\begin{remark} The state in the upper level $x\in\mathcal{X}$ is an $L\times N_G$ matrix, whose columns represent the belief state vectors of all copilot groups $g$, for $g=1,\ldots,N_G$. That is, $x = (\vec b_{1}, \ldots,\vec b_{N_G})$.
	\end{remark}
	
	In the upper level,  at every decision epoch $n=0,1,\ldots$, the decision is to select which $U_{max}$ copilot groups out of the $N_G$  will transmit their positions to the BSs. That is, we consider the action vector $\vec u (n)=(u_{1}(n),\ldots,u_{N_G}(n))\in\mathcal{A}=\{0,1\}^{N_G}$, such that
	\begin{equation}\label{eq:action_space}
	\begin{aligned}
	& u_{g}(n)=
	\begin{cases}
	1 & \mbox{ users in copilot group $g$  feedback their positions at decision epoch $n$,}  \\
	0 & \mbox{otherwise. } 
	\end{cases}
	\end{aligned}
	\end{equation}
	
	At decision epoch $n$,  the transition probability from belief state matrix $x(n)\in\mathcal{X}$ to belief state matrix $x(n+1)\in\mathcal{X}$ is defined by
	\begin{align}\label{eq:belief_transition}
	\mathbb{P}^{up}(x(n+1)=x'|x(n)=x,\vec u(n))=& \mathbb{P}(\vec b_{1}(n+1)=b_{1}'|\vec b_{1}(n)=b_{1},\vec u(n))\cdot\ldots\\
	&\cdot\mathbb{P}(\vec b_{N_G}(n+1)=b_{N_G}'|\vec b_{N_G}(n)=b_{N_G},\vec u(n)), \nonumber
	\end{align} 
	where, $x'=(\vec b_{1}',\ldots,\vec b_{N_G}')$, $x=(\vec b_{1},\ldots,\vec b_{N_G})$ with $b_{g}',b_{g}\in\mathcal{X}_{g}$ for all $g=1,\ldots,N_G$ and $\vec u(n)\in\mathcal{A}$. The latter is satisfied because all users have independent movements. Recall that each  position combination of users in copilot group  $g$ is characterized by a set of large scale fading coefficients $\beta_{gl}^{[j]}, j\in\{1,\ldots,C\},\; l\in\{1,\ldots,C\}$. In the fast-time scale, we define the state-space by $X=\{0,\ldots,H-1\}^{N_G}$, that is, the set of all possible delay vectors. Namely, $\vec d=(d_1,\ldots,d_{N_G})\in X$ is such that $d_g$ is the CSI delay of all users in copilot group $g$, i.e., $\lambda_g$. The action space is $A=\{0,1\}^{N_G}$. For $\vec a=(a_1,\ldots,a_{N_G})\in A$, $a_g, 1,\ldots,{N_G}$ is given by
	\begin{equation}\label{eq:action_space}
	\begin{aligned}
	& a_g=
	\begin{cases}
	1 & \mbox{ copilot group $g$ is scheduled for uplink training,}  \\
	0 & \mbox{otherwise. } 
	\end{cases}
	\end{aligned}
	\end{equation}
	The decision times at the fast-time scale ({\it lower level}) will be denoted by $t=\{t_0,t_1,\ldots\}$, with $t_{nH}=n$ for all $n=0,1,\ldots$ and $H$ the finite-time horizon in the {\it lower level}. 
	Moreover, we make the assumption that the decision $\vec u (n+1)$, in the slow-time scale, is made right after the decision at time $t_{nH}$. We denote by $\vec d(0)=\vec d^0\in X$ the initial state in the fast-time scale at $n=0$ and $x_0\in \mathcal{X}$ the initial state in the slow-time scale. In this particular model, the fast time scale transitions from time $t_{nH}$ until time $t_{(n+1)H-1}$ for all $n\geq 0$ are deterministic. Namely,
	\begin{align}\label{eq:deterministic_trans}
	d_g(t_{nH+j}) = (1+d_g(t_{nH+j-1}))(1-a_g(t_{nH+j})), \hbox{ for all } n\geq0, \hbox{ and } 1\leq j\leq H. 
	\end{align}
	At the fast time scale, we therefore encounter a finite-state finite-horizon deterministic sequential-decision problem \cite{finite_hor}. The reward in this level, at time $t$ with  MRC receivers, is the following
	\begin{align}\label{Rlow}
	R^{low}(\vec d (t),\vec a(t), x, \vec u) = \sum_{g=1}^{N_G}\sum_{l=1}^C\left(1-\frac{1}{T_s}\sum_{i=1}^{N_G} a_i(t)\right)\log\left(1+\hbox{SINR}_{gl}^{MRC}(\vec d(t), x,\vec u)\right),
	\end{align}
	where $x\in\mathcal{X}$ and $\vec u \in \mathcal{A}$ are fixed and
	\begin{align}\label{Reward_MF}
	\hbox{SINR}_{gl}^{MRC}(\vec d(t), x,\vec u) = \frac{(M-1)(\beta_{gl}^{[l]})^2(\rho_{gl}^{[l]})^{2d_{g}(t)}}{(M-1)\times I_{gl}^p + I_{gl}^n},
	\end{align}
	the SINR of user $g$ in cell $l$ with MRC receiver.  $I^p_{gl}$ and $I_{gl}^n$ are given in  Theorem $1$.\\ 
	Note that the reward function at the lower level, i.e., $R^{low}$, depends on the belief state and the decision in the upper level.
	We now define the sequence $\pi^{low}=\{\vec \phi^{low}_n\}_{n=0}^\infty$, where for each $n$, 
	\begin{align}\label{eq:phi_low}
	\vec\phi_n^{low}=(\phi_{t_{nH}}^{low},\phi^{low}_{t_{nH+1}},\ldots,\phi^{low}_{t_{(n+1)H-1}}).
	\end{align}
	Each function $\phi_{t_{nH+j}}^{low} : X\times\mathcal{X}\times\mathcal{A}\to A$ prescribes the action to be taken at decision time $t_{nH+j}$ (in the lower level), for all $n\geq0$ and all $0\leq j\leq H-1$. For this model we only look at the set of stationary decision rules, $\pi^{low}$ with respect to the upper level, such that  $\vec\phi_n^{low}(\vec d,x,\vec u)=\vec\phi_{n'}^{low}(\vec d,x,\vec u)$ for all $n$ and $n'$ given $\vec d\in X$, $x\in\mathcal{X}$ and $\vec u\in\mathcal{A}$. That is, for fixed $\vec d\in X$, $x\in\mathcal{X}$ and $\vec u\in\mathcal{A}$ the optimal decision rule in the lower level will be independent of the decision epoch $n$ in the slow time scale. This consideration is in accordance with most existing literature and can also be justified by the considered setting. 
	The set of all possible lower level decision rules will be denoted by $\Pi^{low}$, i.e., $\pi^{low}\in\Pi^{low}$.  Moreover, we drop the dependency on $n$, since we only consider policies that are $n$-independent, and we denote by $\Phi^{low}$ the set of all $H$-horizon policies $\vec \phi^{low}$, i.e., $\vec\phi^{low}\in\Phi^{low}$. We now define $\Phi^{low}_{x,\vec u}\subset \Phi^{low}$ as follows
	\begin{align}\label{eq:Phi_low}
	\Phi^{low}_{x,\vec u} = \{&\vec\phi^{low}_{x,\vec u}:\vec\phi^{low}_{x,\vec u}=(\phi^{low}_{x,\vec u,t_0},\ldots,\phi^{low}_{x,\vec u,t_{H-1}}), \phi^{low}_{x,\vec u,t_{j}}:X\times\{x\}\times\{\vec u\}\to A \hbox{ and } j = 0,\ldots,H-1\}.
	\end{align}
	The latter is the set of all $H$-horizon policies given initial belief state matrix $x$ and action in the upper level $\vec u$. Note that, in the definition of $\Phi^{low}_{x,\vec u}$ to introduce the policy $\vec \phi^{low}_{x,\vec u}$, we use the decision times $t_0,\ldots,t_{H-1}$. This is without loss of generality, since we recall that these policies are independent from $n$. Next we define the reward in the upper level. Namely,
	\begin{align}\label{problem:lower level}
	R^{up}(\vec d, \vec\phi^{low}, x(n),\vec u(n)) = \sum_{t=t_{nH}}^{t_{(n+1)H-1}}R^{low}(\vec d(t),\phi_{t}^{low}(\vec d(t),x(n),\vec u(n)), x(n), \vec u(n)),
	\end{align}
	where $\vec d$ is the delay state vector at time $t_{nH}$.
	We remark that none of the upper level decisions incur in an immediate cost. Let us denote by $\Phi^{up}$ the set of all possible stationary decision rules in the upper level, such that $\pi^{up}\in\Phi^{up}$, $\pi^{up}:X\times\mathcal{X}\to\mathcal{A}$. 
	Consequently,  the objective is to find $\pi^{up}\in\Phi^{up}$ and $\pi^{low}\in\Phi^{low}$ such that
	\begin{align}\label{problem:lower upper_level}
	\max_{\pi^{up}\in\Phi^{up}}\max_{\pi^{low}\in\Phi^{low}}\lim_{Z\to\infty}\frac{1}{Z}\sum_{n=0}^{Z-1}\mathbb{E}\left(R^{up}(\vec d(t_{nH}), \pi^{low}, x(n),\pi^{up}(\vec d(t_{nH}),x(n)))\right).
	\end{align}
	
	The latter problem is a POMDP \cite{POMDP}. To see this, it suffices to note that the slow time scale  sequential decision making problem is just a POMDP with a reward that depends on the fast time scale deterministic decision making problem. Therefore the standard theory on Bellman's optimality equations follows.  The optimal decision-rule for this POMDP can be obtained as a solution of the optimality equation for $0<\alpha<1$
	\begin{align}\label{eq:optimality_equation}
	& V(\vec d,x) = \max_{\vec u \in \mathcal{A}}\left(\max_{\vec\phi^{low}_{x,\vec u}\in\Phi^{low} }\{R^{up}(\vec d,\vec\phi^{low}_{x,\vec u},x,\vec u)+\alpha\sum_{y\in\mathcal{X}}\mathbb{P}^{up}(y|x,\vec u)V(\vec d^{\vec \phi^{low}_{x,\vec u}},y)\}\right).
	\end{align}
	$V(\vec d,x)$ denotes the value function \cite{POMDP} which refers, in our case, to the  long term   SE.
	We will now make an assumption that simplifies the model significantly. We define $\overline\Phi^{low}\subset\Phi^{low}$ where
	\begin{align}\label{eq:optimality_equation_simplification}
	\overline\Phi^{low}=\{&\vec \phi^{low}:\vec \phi^{low}=(\phi_{t_0}^{low},\ldots,\phi_{t_{H-1}^{low}}),\\\nonumber
	&\phi^{low}_{t_j}:X\times\{x\}\times\{\vec u\}\to A \hbox{ for } j = 0,\ldots, H-1, \hbox{ and } \phi^{low}_{t_{0}}=(1,\ldots,1)\}.
	\end{align}
	For all $\vec \phi^{low}\in\overline\Phi^{low}$, $\vec\phi^{low}$ is such that, in the first stage of the $H$-horizon problem,  all copilot groups are scheduled for uplink training. This allows us to start every slow-time scale with the same delay state $\vec d(nH)=(0,\ldots,0)$ for all $n=0,1,\ldots$.  Eq.~\eqref{eq:optimality_equation} then reduces to
	\begin{align}\label{eq:optimality_equation_tran}
	&V(x) = \max_{\vec u \in \mathcal{A}}\left(\max_{\vec\phi^{low}_{x,\vec u}\in\overline\Phi^{low} }\{R^{up}(\vec\phi^{low}_{x,\vec u},x,\vec u)\}+\alpha\sum_{y\in\mathcal{X}}\mathbb{P}^{up}(y|x,\vec u)V(y)\right),
	\end{align}
	where $R^{up}(\vec\phi^{low}_{x,\vec u},x,\vec u)=R^{up}((0,\ldots,0),\vec\phi^{low}_{x,\vec u},x,\vec u).$
	If we further denote 
	\begin{align}\label{eq:optimality_equation_max}
	R^{max}(x,\vec u)=\max_{\vec\phi^{low}_{x,\vec u}\in\overline\Phi^{low} }\{R^{up}(\vec\phi^{low}_{x,\vec u},x,\vec u)\},
	\end{align}
	we then obtain a standard one-time scale POMDP, and its optimality equation reduces to
	\begin{align}\label{eq:optimality_equation_standard}
	& V(x) = \max_{\vec u \in \mathcal{A}}\left(R^{max}(x,\vec u)+\alpha\sum_{y\in\mathcal{X}}\mathbb{P}^{up}(y|x,\vec u)V(y)\right).
	\end{align}
	
	POMDPs have been long studied in the literature. It was shown that the complexity of POMDP exact algorithms grows exponentially with the number of state variables \cite{POMDPreview}. Even for simpler finite-horizon POMDPs, finding the optimal policy is PSPACE-hard \cite{POMDPreview}.
	This means that deriving an optimal policy for \eqref{problem:lower upper_level} is too complex since  belief-state monitoring is infeasible for large problems. As \eqref{problem:lower upper_level}  is too complex to  solve directly, we  decompose the problem  and tackle the two time-scales  separately (see Figure \ref{fig:two-time-scales-model}).  Indeed in order to  solve \eqref{problem:lower upper_level}, two decision policies, associated each with a time scale, are needed. 
	While, in the fast time scale,  a  finite horizon training  policy is derived, in the slow time scale, an infinite horizon  position estimation  policy is required. The  combination of the latter will provide a solution to \eqref{problem:lower upper_level}.  In the  slow time scale, at each decision  epoch $n$,  the network  estimates the locations of a maximum of $U_{max}$ copilot groups and update the  large-scale fading coefficients accordingly. In the fast time scale,  between  two  upper level decision epochs ($n$ and $n+1$),   a  finite horizon training  policy is derived based on the  updated user locations that result from the upper level  optimization.

		\subsection{Fast time scale: learning an optimal  training strategy for finite horizon }\label{sec:copilot_grouping_algo}
		In this subsection,  we focus on solving the  lower level  planning problem in order to derive $R^{max}(x,\vec u)$, see equation~\eqref{eq:optimality_equation_max}.  
		We consider a deterministic sequential-decision making problem with reward given in equation~\eqref{Rlow}.
		The actions of the network on  the fast time scale are optimized while assuming a given belief state $x$, an initial  state $\vec d(0)=(0,\ldots,0)$ for all $n=0,1,\ldots$ and a given  action in the upper level $\vec u$.
		The  control horizon  $H$ is  selected to  be equal to the  large-scale  fading coherence block.  Without loss of generality, we consider $n=0$.
		The problem  of optimal users scheduling for uplink training  can be formulated as follows:
		
		\begin{align}\label{eq:finite_horizon_training_scheduling}
		\max_{\vec\phi^{low}_{x,\vec u}\in\overline\Phi^{low} }\{\sum_{t=t_{0}}^{t_{H-1}}\sum_{g=1}^{N_G}\sum_{l=1}^C\left(1-\frac{1}{T_s}\sum_{i=1}^{N_G} a_i(t)\right)\log\left(1+\hbox{SINR}_{gl}^{MRC}(\vec d(t), x,\vec u)\right)  \}, 
		\end{align}
		with
		\begin{align*}
		&\sum_{g=1}^{N_G} a_g(t)\leq \tau, \forall t=t_1,\ldots,t_{H-1}  \;\text{and} \;\;\vec d(0)=(0,\ldots,0).
		\end{align*}
		
		A naive approach to solve problem~\eqref{eq:finite_horizon_training_scheduling} is to generate all  $H$-length sequences of actions and then select the  sequence that  results in the  higher CASE after $H$ slots (brute force). Clearly, this approach can be quite computationally prohibitive when the action space and the optimization horizon are large. A more  appropriate approach is to use the Dynamic Programming (DP) algorithm, more precisely value iteration, see \cite{bertsekas1995dynamic} (based on the Bellman Equation). The DP approach can be used for sequential decision making problems like the one proposed in Eq.~\eqref{eq:finite_horizon_training_scheduling}. 
		\begin{remark}\label{eq:remark4}
			We note that solving (\ref{eq:finite_horizon_training_scheduling})     using the  DP approach can be computationally expensive for large optimization horizons $H$ with  a running time $\mathcal{O}((H-1) \left| X\right| \left| A\right|)$. 
			Consequently,  we provide an algorithm with lower complexity in order to  derive an approximate policy that reaches a guaranteed fraction of the optimal solution.
		\end{remark}
		As mentioned in Remark \ref{eq:remark4}, the DP approach results in  a long running time that can hinder the  uplink training  procedure. Consequently, we adopt an alternative approach  and  trait problem~\eqref{eq:finite_horizon_training_scheduling} by  combinatorial  optimization. Expressing  the  CSI  delays $\vec d (t_j)=(d_1(t_j),\ldots,d_{N_G}(t_j))$ as a function of the action  vectors  $\vec a(t)=(a_1(t),\ldots,a_{N_G}(t)), \; \forall  t=t_0,\ldots,t_{j-1}$ and $\vec d (t)=(d_1(t),\ldots,d_{N_G}(t)), \; \forall  t=t_0,\ldots,t_{j}$, is now in order. Recall  the definition  of the  deterministic fast time scale delay transition
		~\eqref{eq:deterministic_trans}. The  delay  $d_g(t_j), \forall g=1,\ldots,N_G$, can be  written as follows
		\begin{align}\label{eq:delay_function}
		d_g(t_j)= t_j \prod\limits_{t=t_1}^{t_j} (1-a_g(t))+ \sum\limits_{t=t_1}^{t_j}   t\; a_g(t_j-t)    \prod\limits_{h=t_j-t+1}^{t_j} (1-a_g(h)).
		\end{align}
		Consequently, the objective function in problem~\eqref{eq:finite_horizon_training_scheduling} can  be transformed into the following
		\begin{align}\label{eq:finite_combinatorial}
		\max_{\vec a(t_{0}),\ldots,\vec a(t_{H-1})}\{\sum_{t=t_{0}}^{t_{H-1}}\sum_{g=1}^{N_G}\sum_{l=1}^C\left(1-\frac{1}{T_s}\sum_{i=1}^{N_G} a_i(t)\right)\log\left(1+\hbox{SINR}_{gl}^{MRC}(\vec d(t), x,\vec u)\right)  \}, 
		\end{align}
		with
		\begin{align}
		&\sum_{g=1}^{N_G} a_g(t)\leq \tau, \forall t=t_1,\ldots,t_{H-1},\\
		&\hbox{SINR}_{gl}^{MRC}(\vec d(t_j), x,\vec u) =\frac{(M-1)(\beta_{gl}^{[l]})^2(\rho_{gl}^{[l]})^{2(t_j \prod\limits_{t=t_1}^{t_j} (1-a_g(t))+ \sum\limits_{t=t_1}^{t_j}   t\; a_g(t_j-t)    \prod\limits_{h=t_j-t+1}^{t_j} (1-a_g(h)))}}{(M-1)\times I_{gl}^p + I_{gl}^n},\nonumber
		\end{align}
		$I^p_{gl}$ and $I^n_{gl}$ are also defined accordingly by  combining Eq~\eqref{eq:interference} and Eq~\eqref{eq:delay_function}. The following Theorem helps to derive an efficient algorithm  to  solve problem~\eqref{eq:finite_combinatorial}.

		\begin{Theorem}
			Problem~\eqref{eq:finite_combinatorial} is equivalent to maximizing a submodular  set function subject  to  matroid  constraints. 
		\end{Theorem}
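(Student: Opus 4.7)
The plan is to exhibit a ground set, a matroid on it, and a submodular set function so that optimizing~\eqref{eq:finite_combinatorial} over feasible action sequences becomes exactly the maximization of that function subject to matroid independence. First, I would take as ground set $E = \{(g,t) : 1\leq g\leq N_G,\; t\in\{t_1,\ldots,t_{H-1}\}\}$ with the identification ``$(g,t)\in S$'' iff ``$a_g(t)=1$''; the stage-$t_0$ actions are already pinned to $(1,\ldots,1)$ through the definition of $\overline\Phi^{low}$, which is what guarantees $\vec d(t_0)=(0,\ldots,0)$. Partitioning $E = \bigcup_{t=t_1}^{t_{H-1}} E_t$ with $E_t=\{(g,t):g=1,\ldots,N_G\}$, the per-slot pilot-length constraint $\sum_g a_g(t)\leq \tau$ becomes $|S\cap E_t|\leq\tau$ for every $t$, i.e., exactly the independence condition of the partition matroid whose $H-1$ blocks each have rank $\tau$.

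Next I would rewrite the reward as a set function. From~\eqref{eq:delay_function}, the delay $d_g(t,S)$ equals $t$ minus the latest training instant of group $g$ in $S$ that does not exceed $t$ (or $t_0$ if none exists), and each $\text{SINR}_{gl}^{MRC}$ in~\eqref{eq:finite_combinatorial} depends on $S$ only through its own group's delay. Setting $r_g(d)=\sum_{l=1}^C \log(1+\text{SINR}_{gl}^{MRC}(d))$, $R_t(S)=\sum_{g=1}^{N_G} r_g(d_g(t,S))$, and $\tau_t(S)=|S\cap E_t|$, the objective becomes
\begin{equation*}
f(S) \;=\; \sum_{t=t_1}^{t_{H-1}}\Bigl(1-\tfrac{\tau_t(S)}{T_s}\Bigr) R_t(S),
\end{equation*}
so that~\eqref{eq:finite_combinatorial} is equivalent to $\max\{f(S) : S \text{ independent in the partition matroid}\}$.

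The central step is submodularity of $f$. For $A\subseteq B\subseteq E$ and $e=(g^*,t^*)\notin B$, I would split the marginal $f(S\cup e)-f(S)$ into three pieces: the future-slot rate gains $\sum_{t>t^*}(1-\tau_t(S)/T_s)\bigl(R_t(S\cup e)-R_t(S)\bigr)$, the same-slot rate gain $(1-\tau_{t^*}(S)/T_s)\bigl(R_{t^*}(S\cup e)-R_{t^*}(S)\bigr)$, and the training-cost term $-R_{t^*}(S\cup e)/T_s$. First I verify that $r_{g^*}$ is monotone decreasing in its delay argument (as both the $\rho^{2d}$-weighted numerator and the sign of the $\rho^{2d}$-term in $I^n_{g^*l}$ make $\text{SINR}_{g^*l}^{MRC}$ monotone decreasing in $d$). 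A case analysis on whether $t^*$ is no later than, strictly between, or strictly later than the latest training times $N_A, N_B$ of $g^*$ in $A$ and in $B$ then yields $R_t(A\cup e)-R_t(A)\geq R_t(B\cup e)-R_t(B)\geq 0$ for each $t\geq t^*$. Combined with $(1-\tau_t(A)/T_s)\geq(1-\tau_t(B)/T_s)$, the first two pieces are larger on $A$ than on $B$. For the third piece, monotonicity of $R_{t^*}$ in $S$ (adding training only shortens delays) gives $R_{t^*}(A\cup e)\leq R_{t^*}(B\cup e)$, hence $-R_{t^*}(A\cup e)/T_s\geq -R_{t^*}(B\cup e)/T_s$. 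Summing the three pieces establishes $f(A\cup e)-f(A)\geq f(B\cup e)-f(B)$.

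The main obstacle is the negative training-cost term $-R_{t^*}(S\cup e)/T_s$: it makes $f$ non-monotone, so one cannot simply quote submodularity of a coverage-type reward. The critical observation that makes the argument go through is that, although this term is non-positive, it is monotone in $S$ in the direction needed, so negation preserves the diminishing-returns inequality rather than breaking it. A secondary bookkeeping subtlety I would handle along the way is the ``strictly between'' case $N_A<t^*\leq N_B$, where the gain vanishes on $B$ but not on $A$; this is controlled by the monotonicity of $r_{g^*}$ together with $N_A\leq N_B$.
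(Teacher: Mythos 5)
Your proof is correct and follows the same overall blueprint as the paper's: ground set indexed by (copilot group, slot), the per-slot budget $\sum_g a_g(t)\le\tau$ encoded as a partition-matroid independence condition, and submodularity established through diminishing marginal returns driven by the monotonicity of the rate in the CSI delay. The differences are in how the marginal is organized and in what lemma carries the argument. The paper first writes the objective as $\sum_g f_g$ and proves each $f_g$ submodular by comparing nested action sequences, splitting into the cases where the added element belongs to group $g$ or not; its key inequality is that $\log\bigl(\tfrac{1+\mathrm{SINR}_{gl}(d-1)}{1+\mathrm{SINR}_{gl}(d)}\bigr)$ is decreasing in $d$, a second-order (discrete convexity) property of $\log(1+\mathrm{SINR})$, and it models the effect of an extra training instant as a unit decrement of the delay. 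You instead split the marginal of a single element by time slot (future-slot gains, same-slot gain, and the $-R_{t^*}(S\cup e)/T_s$ training-cost term) and track the delay exactly as $\min(d_{g^*}(t,S),\,t-t^*)$; as a result your argument needs only the first-order fact that the per-group rate is nonincreasing in its delay, plus the observation that the negative cost term is monotone in the direction that preserves diminishing returns. This buys you a cleaner and slightly more robust proof: you avoid the paper's stronger ratio-monotonicity claim and its $d\mapsto d-1$ bookkeeping, which does not literally describe how a training reset changes the delay. One further divergence: you package the constraints as a \emph{single} partition matroid, whereas the paper treats them as an intersection of $H-1$ matroids in order to invoke the Lee et al.\ approximation guarantee; the feasible sets are identical, so both satisfy the theorem statement, but your viewpoint would in fact support a better downstream approximation ratio than the one the paper quotes. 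The only soft spot common to both proofs is that the monotonicity of $\mathrm{SINR}_{gl}^{MRC}$ in $d$ is asserted rather than verified against the fact that the pilot-contamination term $I^p_{gl}$ also decreases with $d$; since the paper takes this for granted as well, it is not a gap specific to your proposal.
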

		\begin{IEEEproof}
			See appendix C.
		\end{IEEEproof}
		The  structure of  problem~\eqref{eq:finite_combinatorial} is quite convenient. In fact, even-though the  objective function  is not monotone, efficient approximation algorithms  exist for the non-monotone submodular  set function case. 
		In this  work, we  make use of the approximation algorithm proposed in \cite{sub_non_monotone} which provides a 
		$\left( \frac{1}{k+2+\frac{1}{k}+\epsilon}\right) $-approximation  of the optimal solution  under $k$ matroid constraints.
		In our case, we consider $H-1 $ matroid constraints. Each one is associated with a given optimization stage $t, t=t_1,\ldots,t_{H-1}$.  Consequently, the  proposed algorithm in this subsection provides a  	$\left( \frac{1}{H+1+\frac{1}{H-1}+\epsilon}\right) $-approximation  of the optimal cumulative average spectral efficiency with a running time $(N_G(H-1))^{\mathcal{O}(H-1)}$ \cite{sub_non_monotone}. The detailed algorithm  is given in table  $I$.
		We define the  ground set $G=\{v_{1t_1}, \ldots, v_{N_Gt_1},\ldots,v_{1t_{H-1}},\ldots, v_{N_Gt_{H-1}}\} $, where each element $v_{gt}$ represents the scheduling of copilot group $g$ for training at slot $t$. We also  define the sets $\mathcal{I}_{t},\; t=t_1,\ldots,t_{H-1} $. Each $\mathcal{I}_t $ contains the selected elements at stage $t$ with $\lvert\mathcal{I}_t\rvert \leq \tau$. 
		\begin{center}
			\begin{tabular}{ l  }
				\hline
				\hline
				$1. $ Set $G_0=G$:\\
				$2. $ for $t_1<h<t_{H-1}$:\\
				$3. $ Apply Approximate local search Procedure (table  II) on the ground set $G_h$ to obtain\\
				a solution $S_h\subset G_h$ corresponding to the  problem: $\;\;\hbox{max}_{S}(R^{up}(S,x,\vec u): S \subset G_h)$\\
				$4. $ set  $G_{h+1}=G_h \setminus S_h $\\
				$5. $ Return the best solution ($R^{max}(x,\vec u)=\hbox{max}_{S_1,\ldots,S_{H-1}}(R^{up}(S_h,x,\vec u))$).\\
				\hline
				\hline
			\end{tabular}
			\captionof{table}{\textbf{Algorithm for Approximate Finite horizon training strategy }}
		\end{center}
		\begin{center}
			\begin{tabular}{ l  }
				\hline
				\hline
				\emph{Input}: Ground set $X$ of elements\\
				$1. $ Set $v \longleftarrow \hbox{argmax}_{u \in X}(f(u))$ and  $S\longleftarrow \{v\}$\\
				$2. $ While one of the following local operations applies, update	$S$ accordingly\\
				\textbullet   Delete Operation  on $S$:\\
				If $e \in S$ such that   $f(S\setminus \{e\})    >  (1+\frac{\varsigma}{N_G^4}) f(S)$ then $S\longleftarrow S\setminus\{e\} $\\
				\textbullet   Exchange Operation  on $S$:\\
				If $d \in X \setminus S$ and $e_h \in S \cup \{\emptyset\}\; (\text{for}\; t_{1}<h<t_{H-1})$ are such  that   $ (S\setminus\{e_h\}) \cup  \{d\} \in \mathcal{I}_h $ \\
				for all  $h$ and  $f((S\setminus \{e_1,\ldots,e_{H-1}\} )\cup  \{d\}) >  (1+\frac{\varsigma}{N_G^4}) f(S) $,\\ then $S \longleftarrow  (S\setminus \{e_1,\ldots,e_{H-1}\} )\cup  \{d\} $.\\
				\hline
				\hline
			\end{tabular}
			\captionof{table}{\textbf{Approximate Local  search Procedure }}
		\end{center}

\subsection{Slow time scale:  adapting to  user mobility }

Once the  fast time scale planning problem is solved, we  tackle the  infinite horizon  positioning  problem of the slow time scale. Since we have chosen to decompose \eqref{problem:lower upper_level} into two levels, the  combination of the  policies, in the two time scales,  will provide  an infinite horizon policy  that  solves \eqref{problem:lower upper_level}.  The mobility of each copilot  group   $g$ is modeled by an $L$-state Markov chain.  The  positions of users in  a  each  copilot group $g$ remain the  same  for a given period which is equal  to  the  large scale  fading coefficients coherence block and  evolves according to the probability transition matrix $P_{g}$.

Solving the slow time scale control problem, directly, becomes intractable for  a large number of users and  possible positions, owing to   the resulting complexity  of belief-state monitoring   \cite{POMDPreview}. Nevertheless,  practical methods exist if  policy optimality is  abandoned  for the sake of  convergence  speed. We adopt  the  approximate approach in Nourbakhsh et al. \cite{Nourbakhsh}, which  solves  a POMDP  by  exploiting its underlying  Markov Decision  problem (MDP). This  is  done  by  ignoring the  agent's confusion (uncertainty about users locations)  and assuming  that it is in its most likely state (MLS). Replacing  a complicated POMDP  Problem by its underlying MDP enables to considerably reduce  complexity  since the  belief space is replaced by a more practical and  smaller state space.

We now discuss in more details how the  upper level policy is derived. Particularly, in our case, the  state of  the  underlying MDP, at a given  decision  epoch,  $s \in \mathcal{S}$ is  an $ N_G \times 1$ vector whose elements represent the  location  of  all copilot groups. That is, $s = (\ell_{1}, \ldots,\ell_{N_G})$. The most likely positions of users, for  each  decision  epoch $n=0,1,\ldots$,   are  obtained  as
\begin{align}\label{eq:most_likely}
& \{\ell^*_{1}(n),\ldots,\ell^*_{N_G}(n)\} = \hbox{argmax}_{\{\ell_{1}(n),\ldots,\ell_{N_G}(n)\} \in\{1,\ldots,L\}^{N_G} }( \prod\limits_{g=1}^{N_G} \vec b_{g\ell_{g}} (n)),
\end{align}
Recall that the  belief position at decision epoch $n$ depends on the  belief state transition given in~\eqref{eq:belief_transition}.
Using (\ref{eq:most_likely}),   the  agent's uncertainty  about user locations is removed and the upper level planning problem  is transformed to  a more practical  MDP. The  resulting MDP  is  solved using  value iteration \cite{finite_hor}. At each iteration, the CP  updates its belief-state (according to ~\eqref{eq:belief_transition})  and assumes that the users are in their most likely positions  (according to (\ref{eq:most_likely})). Then, a training policy is derived in  the  fast time scale, based on the  assumed positions. Deriving the latter can be  done using the algorithm  in table $I$. 
This  provides  the  upper level  reward which is equivalent  to  the  $H$-horizon lower level  reward $R^{max}(x,\vec u)=\max_{\vec\phi^{low}_{x,\vec u}\in\overline\Phi^{low} }\{R^{up}(\vec\phi^{low}_{x,\vec u},x,\vec u)\}$. The same procedure is repeated until  deriving the best position estimation decision for each  most likely state.
Although the derived policy  provides only an  approximate location estimation strategy, it  enables, nevertheless, to  solve  a problem otherwise  intractable in realistic scenarios.

\section{Numerical Results} 

In this section we provide some numerical results to validate  the analytical expression derived in section III and to demonstrate the performance of the proposed training/copilot
group scheduling scheme. We also  showcase the performance of the  proposed uplink training  learning procedures. We compare the   obtained results for the proposed schemes with  a   reference model where all scheduled users take part in uplink training. Consequently, the reference model is characterized by  $0$ CSI delay for all users and higher  training overhead which taken to be equal to the number  of scheduled users per cell.
We consider  $C = 7$ hexagonal cells, each of which has a radius of $1.5 \;\text{Km}$. The possible  positions of the mobile users are generated randomly in each cell with  minimum distance of  $10\; \text{m} $  to their serving BSs.
The movement  velocities and directions  are generated randomly for all users.  User speeds are drawn randomly from $[4 Km/h ; 80 Km/h]$. This interval  covers pedestrian  public transportation and urban car movement speeds. The angle separating the movement direction of the mobile devices and the directions of their incident waves are drawn from $[0, 2\pi]$. The path-loss exponent is considered to  be equal to  $3.5$. A  coherence slot of $T_s=200$ samples is assumed. We also  consider a coherence  time  of  $1 \; \text{ms}$. The  system operates over a bandwidth  of $200 \text{MHz}$ as considered in 5G systems \cite{5G}.  Once the copilot groups formed, we consider $L=5$ possible  position combinations for each  group.
The  transition probabilities matrices $P_{g}, g=1,\ldots,N_G$ are also generated randomly with  $\sum_{j=1}^{L}  p_{g}(i,j) =1, \forall  g=1,\ldots,N_G, \forall i=1,\ldots,L$.
%

\begin{figure}[!]
	\centering	
	\includegraphics[width=8.5cm,height=7.5cm]{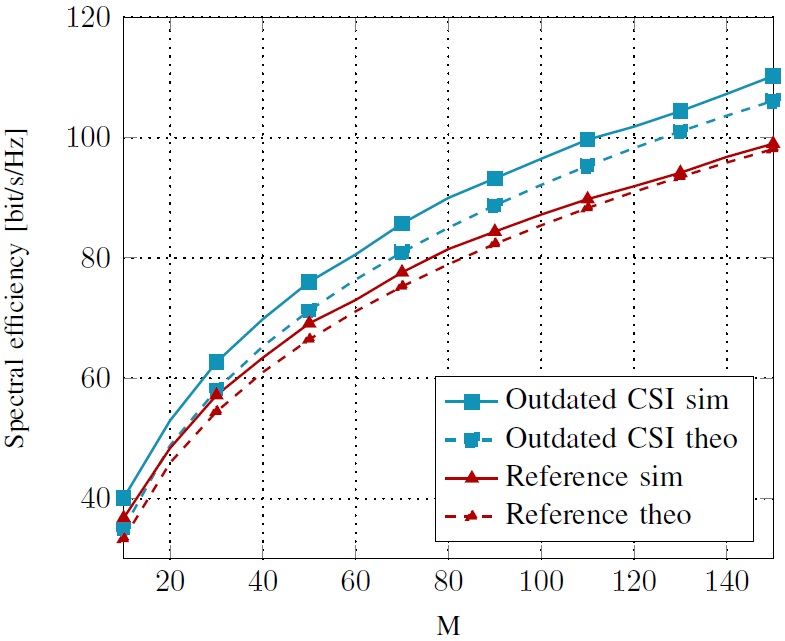}
	\captionsetup{font=footnotesize}
	\caption{Spectral efficiency for varying values of M }
\end{figure}

Figure 2 examines the tightness of the proposed analytical lower bound given in Theorem $1$. As can be  observed,  the  proposed  lower  bound  almost  overlap	with  the   simulation  curve. In addition, we readily see that  using  outdated CSI  with the implicated decrease of training  resources increases the SE by  $ 6.91 \;\text{bit/s/Hz}$  for  $M=50$. This gain  attains $ 11.2\; \text{bit/s/Hz}$ for $M=150$.

%
%
%

\begin{figure}[!]
	\centering	
	\includegraphics[width=8.5cm,height=7.5cm]{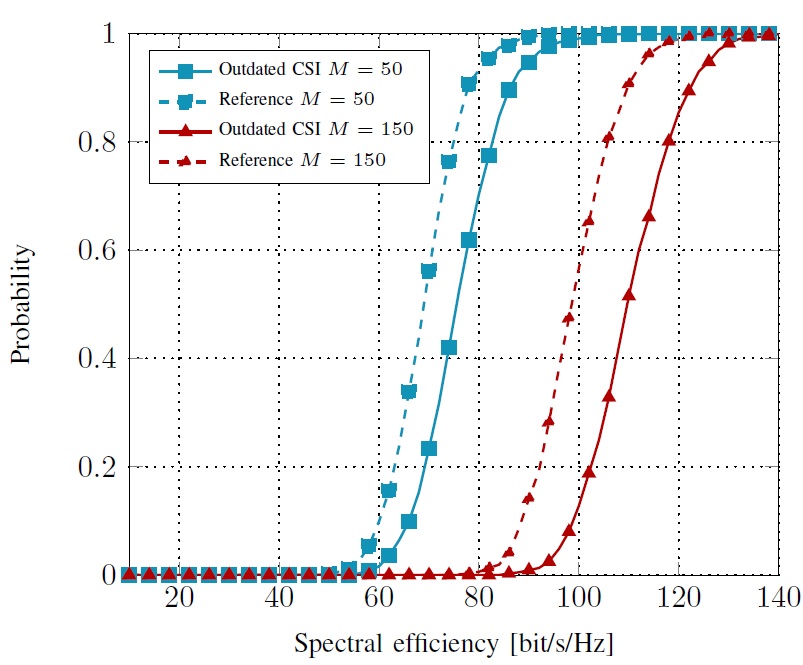}
	\captionsetup{font=footnotesize}
	\caption{Comparison of the  CDFs of spectral efficiency ($N_G=30$) }
\end{figure}

Figure 3 presents a comparison of  CDFs of the  achievable SE between the reference  model and  the proposed  training scheme for different numbers of antennas at the BS. 
For $50$ BS antennas, the  proposed training scheme achieves a gain in the $5\%$ outage  rate of  $ 6 \;\text{bit/s/Hz}$. For    $150$  antennas, the gain in the  $5\%$-outage  rate  grows to $ 8 \;\text{bit/s/Hz}$.  This increase in the performance is mainly due to the reduced  training  resources which can be  used to transmit more data.

%
%
%

\begin{figure}[!]
	\centering	
	\includegraphics[width=8.5cm,height=7.5cm]{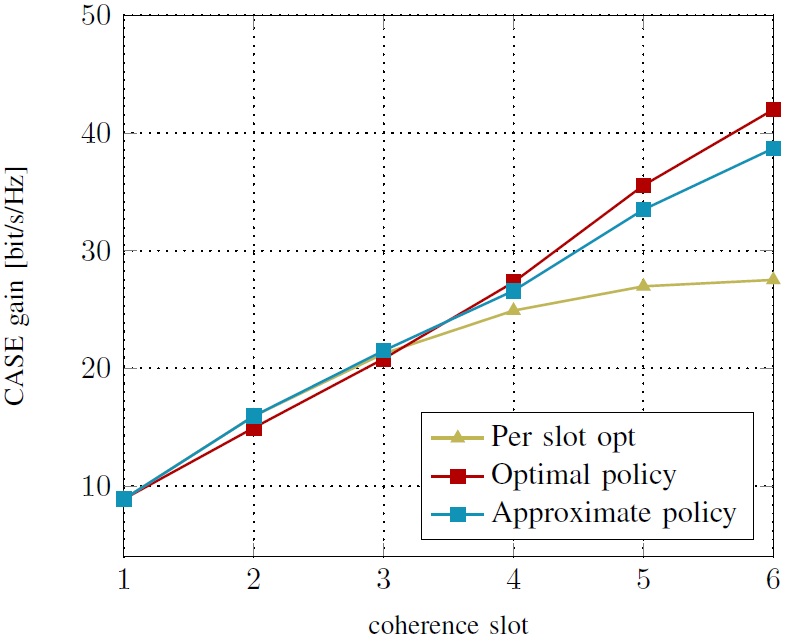}
	\captionsetup{font=footnotesize}
	\caption{ CASE gain for different  lower level algorithms}
\end{figure}

We now investigate the  perfomance of the proposed two time scale  learning algorithms proposed in sections IV.B,  IV.C  \& IV.D.	The performance is evaluated as the difference between the achievable CASE of  the  considered methods and a classical  Massive MIMO  TDD protocol. 

In	Figure 4, we illustrate the performance  of the uplink training learning algorithms in  sections IV.B \& IV.C.
The performance of optimal policy (Value iteration) and the approximate one (Algorithm table I) are compared with the  case where outdated CSI  is used  with  a per slot optimization.	The latter means that  the evolution of the correlation between the estimated CSI  and the actual  channel according to the time dimension is not  taken into  consideration  and  the scheduling  of copilot groups for uplink training  is optimized in order to  maximize the  ASE at each slot. Figure 4 shows that  using value iteration an the approximation algorithm in table I, the gain in CASE is maintained and attains $41.99\;\text{bit/s/Hz}$ and $38.7\;\text{bit/s/Hz} $ respectively, at the final  stage of the  optimization horizon $H$. However, although per slot optimization 
achieves also a gain  in CASE, we can  see that  this  method performs poorly in comparison with the proposed policies which shows the paramount importance of taking  the  time dimension into consideration when optimizing uplink training  decisions. Finally, due to its good performance,  we can deduce that  the approximate method (Algorithm table I) represents an efficient low complexity  substitute to the  more computationally prohibitive DP approach.

\begin{figure}[!]
	\centering	
	\includegraphics[width=8.5cm,height=7.5cm]{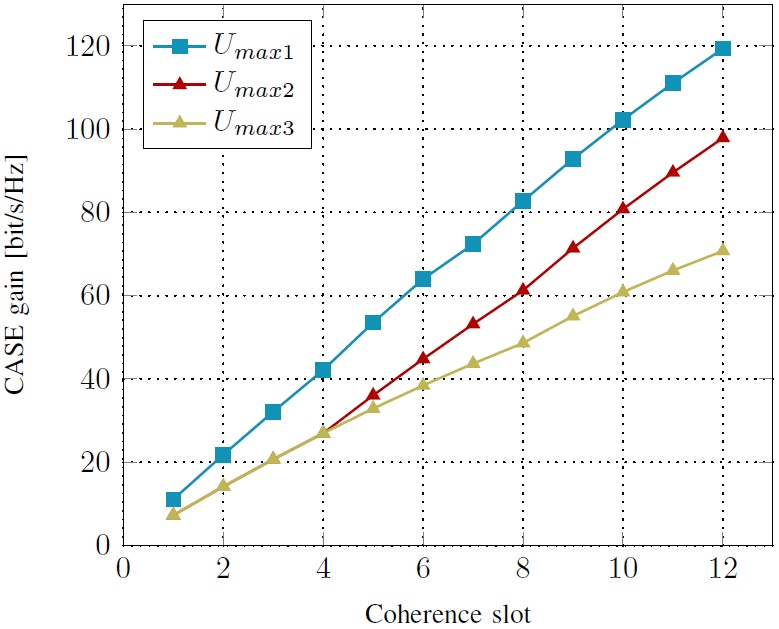}
	\captionsetup{font=footnotesize}
	\caption{CASE gain for different $U_{max}$ values ($H=4$, $n=0,\ldots,2$, $U_{max1}=N_G,\;U_{max2}=N_G-4  \;\text{and} \;U_{max3}=N_G-8$}
\end{figure}

%
%
%
%

In	Figure 5, we illustrate the  achievable  CASE  gain  after 3 upper level  decision epochs with $H=4$. In this  example $3$ values for $U_{max}$ were considered. As can be
readily  observed, decreasing  $U_{max}$ results in lower CASE  gain. This is quite intuitive since a lower $U_{max}$
results in more confusion about the  users locations. In fact,  the CPU  commits more errors when  inferring  users positions  from its belief states for lower $U_{max}$ values. Nevertheless, despite the positioning errors the proposed two time scale  learning approach is able to provide a considerable CASE gain of $110.26\;\text{bit/s/Hz} $, $97.863\;\text{bit/s/Hz} $ and  $70.73\;\text{bit/s/Hz} $ with $U_{max1},\;U_{max2}  \;\text{and} \;U_{max3}$  respectively, after $12$ $T_s$ slots.
These results did not showcase the  energy and signaling gains  that result from  reducing positioning estimation but are sufficient  to prove the advantages of allowing the network to proactively plan  its uplink training  decisions for long time periods.

\section{Conclusion}

In this paper, we analyzed the performance  of  an adaptive uplink training scheme for TDD massive MIMO systems, taking into account the  actual coherence time of the wireless channels, the  impact of  channel  aging and user mobility.
The idea is to  adapt the  periodicity of   CSI  estimation based on the actual coherence times.  We proposed a two time scale  control problem  in order to  allow the  network to  learn  the  best uplink  training policy taking into  consideration user mobility, channel coherence time and  practical signaling  overhead limitations.
In the fast time scale, the network  learns an optimal training policy  by choosing which  users are requested to send their pilot signals for  a predefined optimization horizon.	In  the slow time scale, owing to practical signaling and  processing overhead limitation, the  network is required to  choose which users are required to feedback their  positions, based on  their belief states. The  present  work shows that the aforementioned approach enables to leverage the  time evolution of  the  correlation between the wireless channel  and  the estimated CSI and  provides an impressive increase in the  achievable cumulative average spectral efficiency that cannot be obtained otherwise. Future work include the investigation  of similar procedures with fairness consideration  and  user traffic awareness.

\appendix


\section{Appendix}

\subsection{Proof of Theorem 1}\label{appendix:theorem1}
The network serves $N_g$ copilot groups, $\tau$  of which are scheduled for uplink training. At the reception, each  BS uses MRC receivers that are based on the latest  CSI estimates. BS  $l$ detects the signal of user $g$ in cell $l$ by applying the following filter $ u_{gl}(t)= \frac{\hat{g}^{[l]}_{ gl } (t-d_{g}) }{\lVert \hat{g}^{[l]}_{ gl } (t-d_{g})  \rVert}, t\geq d_{g}$,  where $\hat{g}^{[l]}_{ gl } (t-d_{g})$ denotes the latest available  CSI estimate  for user $g$ in cell $l$.
Consequently, the detected signal  of user $g$ in cell $l$ is given by the following
\begin{align}\label{TDD_appendix:eq:received_signal}
u_{gl}^\dagger(t) \frac{Y^{[l]}_u(t)}{\sqrt{P_u}}&= u_{gl}^\dagger(t)((\rho^{[l]}_{ gl })^{d_{g}} \hat{g}_{gl}^{[l]}(t-d_g)S_{gl}+\sum_{c \neq l}^C (\rho^{[l]}_{ gc })^{d_{g}}\hat{g}_{gc}^{[l]}(t-d_g)S_{gc}+\sum_{k\neq g}^{N_g}\sum_{c=1}^Cg_{kc}^{[l]}(t)S_{kc}\nonumber\\
&+\sum_{c=1}^C(\rho^{[l]}_{ gc })^{d_{g}}\tilde{g}_{gc}^{[l]}(t-d_g)S_{gc}+\sum_{c=1}^C\sum_{j=0}^{d_g-1} (\rho^{[l]}_{ gc })^{j} \sqrt{\beta_{gc}^{[l]}} \varepsilon_{gc}^{[l]}(t-j)S_{gc}+\frac{W_u(t)}{\sqrt{P_u}})\\
&=u_{il}^\dagger(t) (I_1(t)+I_2(t)+I_3(t))\nonumber,
\end{align}
with
\begin{align}
I_1(t) &= (\rho^{[l]}_{ gl })^{d_{g}}\hat{g}_{gl}^{[l]}(t-d_g)S_{gl},\\
I_2(t) &= \sum_{c \neq l}^C  (\rho^{[l]}_{ gc })^{d_{g}} \hat{g}_{gc}^{[l]}(t-d_g)S_{gc},\\
I_3(t) &=  \sum_{c=1}^C(\rho^{[l]}_{ gc })^{d_{g}}\tilde{g}_{gc}^{[l]}(t-d_g)S_{gc}+\sum_{c=1}^C\sum_{j=0}^{d_g-1} (\rho^{[l]}_{ gc })^{j} \sqrt{\beta_{gc}^{[l]}} \varepsilon_{gc}^{[l]}(t-j)S_{gc}+\sum_{k\neq g}^{N_g}\sum_{c=1}^Cg_{kc}^{[l]}(t)S_{kc}+\frac{W_u(t)}{\sqrt{P_u}}
\end{align}
Equation \ref{TDD_appendix:eq:received_signal} follows from the fact that $g_{kc}^{[l]}(t)=\sqrt{\beta_{kc}^{[l]}}h_{kc}^{[l]}(t)$, $h_{kc}^{[l]}(t)=\rho_{kc}^{[l]}h_{kc}^{[l]}(t-1)+\varepsilon_{kc}^{[l]}(t)$ for all $t$ and $g_{kc}^{[l]}(t)=\hat g_{kc}^{[l]}(t)+\tilde g_{kc}^{[l]}(t)$ for all $t$.
We note that $I_1(\cdot)$ refers to  the useful  signal,  $I_2(\cdot)$ represents the impact of pilot contamination and $I_3(\cdot)$ regroups the impact of  white noise, estimation error, non correlated interference due to users with different pilot sequences and the impact of aging. The instant SE attained by user $g$ in cell $l$ is:
\begin{align}\label{TDD_appendix:eq:SINR_user}
R_{g,l} = \left(1-\frac{\tau}{T}\right)\log\left(1+\frac{|u_{gl}^\dagger(t) I_1(t)|^2}{|u_{gl}^\dagger(t) I_2(t)|^2+|u_{gl}^\dagger(t) I_3(t)|^2}\right).
\end{align}
We now define $\overline R_{g,l}$ to be the average achievable sum rate of user $g$ in cell $l$, namely,
\begin{align}\label{TDD_appendix:eq:average_SINR_user}
\overline R_{g,l}=\mathbb{E}\left(\mathbb{E}\left(\left(1-\frac{\tau}{T}\right)\log\left(1+\frac{|u_{gl}^\dagger(t) I_1(t)|^2}{|u_{gl}^\dagger(t) I_2(t)|^2+|u_{gl}^\dagger(t) I_3(t)|^2}\right)\bigg|\hat g_{gl}^{[l]}(t-d_{g})\right)\right),
\end{align}
the last equality follows from the law of total expectation. Let us define $\overline R_{g,l}^0$ such that
\begin{align}\label{TDD_appendix:eq:average_knowing}
\overline R_{g,l}^0 = \mathbb{E}\left(\left(1-\frac{\tau}{T}\right)\log\left(1+\frac{|u_{gl}^\dagger(t) I_1(t)|^2}{|u_{gl}^\dagger(t) I_2(t)|^2+|u_{gl}^\dagger(t) I_3(t)|^2}\right)\bigg|\hat g_{gl}^{[l]}(t-d_{g})\right),
\end{align}
therefore, $\overline R_{g,l} = \mathbb{E}(\overline R_{g,l}^0 )$.
Based on the convexity of $\text{log}(1 +\frac{1}{x+a})$, and Jensen's  inequality  we obtain the following
\begin{align}\label{TDD_appendix:eq:average_SINR_zero}
\overline R^0_{g,l}\geq\left(1-\frac{\tau}{T}\right)\log\left(1+\frac{|u_{gl}^\dagger(t) (\rho_{gl}^{[l]})^{d_{g}}\hat g_{gl}^{[l]}(t-d_{g})|^2}{\mathbb{E}(|u_{gl}^\dagger(t) I_2(t)|^2|\hat g_{gl}^{[l]}(t-d_{g}))+\mathbb{E}(|u_{gl}^\dagger(t) I_3(t)|^2|\hat g_{gl}^{[l]}(t-d_{g}))}\right),
\end{align}
since 
\begin{align}
\mathbb{E}(|u_{gl}^\dagger(t)I_1(t)|^2|\hat g_{gl}^{[l]}(t-d_{g}))=|u_{gl}^\dagger(t)(\rho_{gl}^{[l]})^{d_{g}}\hat g_{gl}^{[l]}(t-d_{g})|^2,
\end{align}
We now aim at computing $\mathbb{E}(|u_{gl}^\dagger(t) I_j(t)|^2|\hat g_{gl}^{[l]}(t-d_{g})) \hbox{ for } j=2,3.$  In order to do so, we start by obtaining an alternative expression for $I_2(t)$, that is, 
\begin{align}
I_2(t) &= \sum_{c \neq l}^C\hat{g}_{gc}^{[l]}(t-d_g)S_{gc}= \hat{g}_{gl}^{[l]}(t-d_g) \sum_{c \neq l}^C \frac{\beta_{gc}^{[l]}}{\beta_{gl}^{[l]}}  S_{gc},	
\end{align}
since $\hat{g}_{gc}^{[l]}(t-d_g)=\hat{g}_{gl}^{[l]}(t-d_g)\frac{\beta_{gc}^{[l]}}{\beta_{gl}^{[l]}} $, $I_2(t)$ and $\hat{g}_{gl}^{[l]}(t-d_g)$ are correlated. Consequently, we obtain
\begin{align}\label{TDD_appendix:eq:average_I2}
&\E{|u_{gl}^\dagger(t) I_2(t)|^2|\hat g_{gl}^{[l]}(t-d_{g})} =   \left| u_{gl}^\dagger(t)  \hat{g}^{[l]}_{ gl } (t-d_{g})    \right|^2    \sum_{c \neq l}^C (\rho^{[l]}_{ gc })^{2 d_{g}} \frac{\beta^{[l]^2}_{gc}}{\beta^{[l]^2}_{gl}}.
\end{align}
We will now compute $\mathbb{E}(|u_{gl}^\dagger(t) I_3(t)|^2|\hat g_{gl}^{[l]}(t-d_{g}))$. First note that, $I_3(t)$ is independent of $\hat g_{gl}^{[l]}(t-d_{g})$ and since $u_{gl}^\dagger(t)$ has unit norm,  we obtain 
\begin{align}\label{TDD_appendix:eq:Ithree}
\mathbb{E}(|I_3(t)|^2 )&=\mathbb{E}\big( \sum_{k\neq g}^{N_g}\sum_{c=1}^C|g_{kc}^{[l]}(t)|^2+    \sum_{c=1}^C |(\rho_{gc}^{[l]})^{d_g}\tilde{g}_{gc}^{[l]}(t-d_g)|^2\\
&+\sum_{c=1}^C\sum_{j=0}^{d_g-1} |\sqrt{\beta_{gc}^{[l]}}(\rho_{gc}^{[l]})^j\varepsilon_{gc}^{[l]}(t-j)|^2    +|\frac{W_u(t)}{\sqrt{P_u}}|^2\big),\nonumber
\end{align}
where the  equality follows from noting the following four properties; (i) $S_{kc}\cdot S_{ic'}=0$ for all $k\neq i$ and all $c,c'\in\{0,\ldots,C\}$, (ii) $\mathbb{E}(Z W_u(t))=\mathbb{E}(Z)\mathbb{E}(W_u(t))=0$ for all random variables $Z$ that are independent of $W_u(t)$ (zero mean complex Gaussian noise), (iii) similar to the previous property, $\mathbb{E}(Z \varepsilon_{ic}^{[l]}(t))=\mathbb{E}(Z)\mathbb{E}(\varepsilon_{ic}^{[l]}(t))=0$ for all $Z$ independent of $\varepsilon_{ic}^{[l]}(t)$ (zero mean complex white Gaussian noise) and finally (iv) $g_{kc}^{[l]}$ and $\tilde g_{k'c'}^{[l]}$ are independent for all $(k,c)\neq(k',c')$.

We now compute the four terms in Equation \ref{TDD_appendix:eq:Ithree}. The last term, i.e., 
\begin{align}\label{TDD_appendix:eq:I4}
\mathbb{E}(|W_u(t)/\sqrt{P_u}|^2)= \frac{1}{P_u}. 
\end{align}

We now compute the third term in Equation \ref{TDD_appendix:eq:Ithree}, that is,
\begin{align}\label{TDD_appendix:eq:I3}
\mathbb{E}\left(\sum_{c=1}^C\sum_{j=0}^{d_g-1} |\sqrt{\beta_{gc}^{[l]}}(\rho_{gc}^{[l]})^j\varepsilon_{gc}^{[l]}(t-j)|^2\right)&=\sum_{c=1}^C\sum_{j=0}^{d_g-1} \beta_{gc}^{[l]}(\rho_{gc}^{[l]})^{2j}   (1-(\rho_{gc}^{[l]})^{2} )\\
&=\sum_{c=1}^C \beta_{gc}^{[l]} \frac{1-(\rho_{gc}^{[l]})^{2d_g}}{1-(\rho_{gc}^{[l]})^{2}}
(1-(\rho_{gc}^{[l]})^{2})\nonumber\\
\nonumber&=\sum_{c=1}^C \beta_{gc}^{[l]} ({1-(\rho_{gc}^{[l]})^{2d_g}}),\nonumber\nonumber
\end{align}
for the second equality we have used the expression of finite geometric sums since $(\rho_{gc}^{[l]})^2<1$ for all $g$ and $c$.
Next we compute the second term in Equation \ref{TDD_appendix:eq:Ithree}, namely,
\begin{align}\label{TDD_appendix:eq:I2}
&\mathbb{E}(\sum_{c=1}^C |(\rho_{gc}^{[l]})^{d_g}\tilde{g}_{gc}^{[l]}(t-d_g)|^2)= \sum_{c=1}^C (\rho_{gc}^{[l]})^{2d_g} \left( \beta_{gc}^{[l]}-\frac{(\beta_{gc}^{[l]})^2}{\frac{1}{P_p}+\sum_{b=1}^C\beta_{gb}^{[l]}}     \right).
\end{align}
the latter is satisfied due to the fact that the variance of $\tilde g_{gc}^{[l]}(t-d_{g})$ is given by $\beta_{gc}^{[l]}-\frac{(\beta_{gc}^{[l]})^2}{\frac{1}{P_p}+\sum_{b=1}^C\beta_{gb}^{[l]}}$ for all $g$ and $c$. We are left with the first term in Equation \ref{TDD_appendix:eq:Ithree}, that is,
\begin{align}\label{TDD_appendix:eq:I1}
&\sum_{k\neq g}^{N_g}\sum_{c=1}^C\mathbb{E}(|g_{kc}^{[l]}(t)|^2)=\sum_{k\neq g}^{N_g}\sum_{c=1}^C\mathbb{E}(|\sqrt{\beta_{kc}^{[l]}}     h_{kc}^{[l]}(t)|^2)=\sum_{k\neq g}^{N_g}\sum_{c=1}^C\beta_{kc}^{[l]},
\end{align}
Combining all four terms, that is, Equations \ref{TDD_appendix:eq:I4}, \ref{TDD_appendix:eq:I3}, \ref{TDD_appendix:eq:I2}, \ref{TDD_appendix:eq:I1} and \ref{TDD_appendix:eq:Ithree}, we obtain
\begin{align}\label{TDD_appendix:eq:Ilast}
\E{|u_{gl}^\dagger(t) I_3(t)|^2}&=\sum_{k\neq g}^{N_g}\sum_{c=1}^C \beta_{kn}^{[l]} +
\sum_{c=1}^C \beta_{gc}^{[l]} ({1-(\rho_{gc}^{[l]})^{2d_g}})
+\sum_{c=1}^C (\rho_{gc}^{[l]})^{2d_g} ( \beta_{gc}^{[l]}-\frac{(\beta_{gc}^{[l]})^2}{\frac{1}{P_p}+\sum_{b=1}^C\beta_{gb}^{[l]}} )\\
&=\sum_{k\neq g}^{N_g}\sum_{c=1}^C \beta_{kc}^{[l]} + \sum_{ c=1}^{C} ( \beta^{[l]}_{gc} -    \rho^{[l]^{2 d_g}}_{ gc }\frac{ \beta^{[l]^2}_{gc}}{\frac{1}{P_{p}}+\sum_{b=1}^{C}  \beta^{[l]}_{gb} })+\frac{1}{P_u}.\nonumber
\end{align}
Substituting the results in Equations \ref{TDD_appendix:eq:Ilast} and \ref{TDD_appendix:eq:average_I2} in Equation \ref{TDD_appendix:eq:average_SINR_zero}, we obtain
\begin{align}
\overline R_{g,l}^0& \geq \left(1-\frac{\tau}{T}\right)\log\left(1+\frac{(\rho_{gl}^{[l]})^{2d_{g}}|u_{gl}^\dagger(t) \hat g_{gl}^{[l]}(t-d_{g})|^2}{F}\right),
\end{align} 
with
\begin{align}\label{TDD_appendix:eq:average_F}
F=&|u_{gl}^\dagger(t) \hat g_{gl}^{[l]}(t-d_{g})|^2     \sum_{c \neq l}^C (\rho^{[l]}_{ gc })^{2 d_{g}} \frac{\beta^{[l]^2}_{gc}}{\beta^{[l]^2}_{gl}}+\sum_{k\neq g}^{N_g}\sum_{c=1}^C \beta_{kc}^{[l]} + \sum_{ c=1}^{C} ( \beta^{[l]}_{gc} -    \rho^{[l]^{2 d_g}}_{ gc }\frac{ \beta^{[l]^2}_{gc}}{\frac{1}{P_{p}}+\sum_{b=1}^{C}  \beta^{[l]}_{gb} })+\frac{1}{P_u}.
\end{align}
From Equation \ref{TDD_appendix:eq:average_SINR_user} and  \ref{TDD_appendix:eq:average_F} we obtain
\begin{align}\label{TDD_appendix:eq:average_line}
\overline R_{g,l}=\mathbb{E}(\overline R_{g,l}^0)=\mathbb{E}\left(\left(1-\frac{\tau}{T}\right)\log\left(1+\frac{(\rho_{gl}^{[l]})^{2d_{g}}}{G}\right)\right) \text{, where, }   G=\frac{F}{|u_{gl}^\dagger(t) \hat g_{gl}^{[l]}(t-d_{g})|^2}.
\end{align} 
Now, we apply Jensen's inequality to the right hand side (RHS) in Eq. \ref{TDD_appendix:eq:average_SINR_user}, that is,  
\begin{align}
\overline R_{g,l}&\geq\left(1-\frac{\tau}{T}\right)\log\left(1+\frac{(\rho_{g,l}^{[l]})^{2d_{g}}}{\mathbb{E}(G)}\right),
\end{align}
with
\begin{align}\label{TDD_appendix:eq:average_G}
\mathbb{E}(G)=& \sum_{c \neq l}^C (\rho^{[l]}_{ gc })^{2 d_{g}} \frac{\beta^{[l]^2}_{gc}}{\beta^{[l]^2}_{gl}}+ \mathbb{E}\left(\frac{1}{|u_{gl}^\dagger(t) \hat g_{gl}^{[l]}(t-d_{g})|^2 }\right)\\
&\cdot\left(\sum_{k\neq g}^{N_g}\sum_{c=1}^C \beta_{kc}^{[l]} + \sum_{ c=1}^{C} ( \beta^{[l]}_{gc} -    \rho^{[l]^{2 d_g}}_{ gc }\frac{ \beta^{[l]^2}_{gc}}{\frac{1}{P_{p}}+\sum_{b=1}^{C}  \beta^{[l]}_{gb} })+\frac{1}{P_u}\right).	\nonumber 
\end{align}
Note that $\left| u_{gl}^\dagger(t)  \hat{g}^{[l]}_{ gl } (t-d_{g})\right|^2$ has a Gamma distribution with parameters $(M,\frac{\beta^{[l]^2}_{gl}}{\frac{1}{P_p}+\sum_{b=1}^{C} \beta^{[l]}_{gb} })$. Consequently, the mean value of $\frac{1}{\left|u_{gl}^\dagger(t)  \hat{g}^{[l]}_{ gl } (t-d_{g})\right|^2}$ is equal to $\frac{1}{(M-1) \times \frac{\beta^{[l]^2}_{gl}}{\frac{1}{P_p}+\sum_{b=1}^{C} \beta^{[l]}_{gb} }}$. Combining this together with the results in Equations \ref{TDD_appendix:eq:average_line} and \ref{TDD_appendix:eq:average_G} we obtain the desired lower bound, that is,
\begin{align}
\overline R_{g,l} &  
\geq \left( 1-\frac{\tau}{T}\right)  \text{log} \left(1+ \frac{(M-1)(\beta^{[l]}_{gl})^2 (\rho^{[l]}_{ gl })^{2d_g} }{(M-1) I^p_{gl} +  I^n_{gl}  } \right),
\end{align}
where $I^p_{gl}$ and   $I^n_{gl}$ are given by
\begin{align}
& I^p_{gl} = \sum_{ c \neq l}^{C} (\rho^{[l]}_{ gc })^{2d_{g}} (\beta^{[l]}_{gc})^2, \hbox{ and } \\
& I^n_{gl}=  (\frac{1}{P_p}+\sum_{b=1}^{C} \beta^{[l]}_{gb}  )\cdot\left(\sum_{k\neq g}^{N_g}\sum_{c=1}^C \beta_{kc}^{[l]} + \sum_{ c=1}^{C} ( \beta^{[l]}_{gc} -    \rho^{[l]^{2 d_g}}_{ gc }\frac{ \beta^{[l]^2}_{gc}}{\frac{1}{P_{p}}+\sum_{b=1}^{C}  \beta^{[l]}_{gb} })+\frac{1}{P_u}\right).
\end{align}	
Summing the achievable SE of all grouped users concludes the proof.
\subsection{Proof of Theorem $2$}\label{appendix:theorem3}	
We consider the asymptotic regime where the number of  BS antennas $M$ grows  large. In this  case, the lower bound on  the SE of each user $g,l$ converges to the  following limit:
\begin{align}
\nonumber&  
\left( 1-\frac{\tau}{T_s}\right) \text{log} \left(1+ \frac{\beta^{[l]^2}_{gl} \rho^{[l]^{2 d_g}}_{ gl } }{\sum_{ b\neq l}^{C} \rho^{[l]^{2 d_g}}_{ gb } \beta^{[l]^2}_{gb} } \right).
\end{align}
The proposed framework is compared with  a reference  massive MIMO system  where, all scheduled users participate in uplink  training. The lower bound on the  achievable SE of each user $g,l$,  in the reference system, converges to the following:
\begin{align}
\left( 1-\frac{N_G}{T_s}\right) \text{log} \left(1+ \frac{\beta^{[l]^2}_{gl} }{\sum_{ b\neq l}^{C}  \beta^{[l]^2}_{gb} } \right).
\end{align}
The aim here, is to improve the  achievable SE of each scheduled users. Consequently, the  SE  of each user in the two considered systems should verify, $\; \forall\; g=1...N_G, l=1...C$:
\begin{align}
& \left( 1-\frac{\tau}{T_s}\right) \text{log} \left(1+ \frac{\beta^{[l]^2}_{gl} \rho^{[l]^{2 d_g}}_{ gl } }{\sum_{ b\neq l}^{C} \rho^{[l]^{2 d_g}}_{ gb } \beta^{[l]^2}_{gb} } \right) \geq  \left( 1-\frac{N_G}{T_s}\right) \text{log} \left(1+ \frac{\beta^{[l]^2}_{gl} }{\sum_{ b\neq l}^{C}  \beta^{[l]^2}_{gb} } \right),
\end{align}
which is equivalent  to the following condition:
\begin{align}
\nonumber&  \frac{\beta^{[l]^2}_{gl} \rho^{[l]^{2 d_g}}_{ gl } }{\sum_{ b\neq l}^{C} \rho^{[l]^{2 d_g}}_{gb } \beta^{[l]^2}_{gb} } \geq  \left(1+\frac{\beta^{[l]^2}_{gl} }{\sum_{ b\neq l}^{C}  \beta^{[l]^2}_{gb} }\right)^{\frac{T_s-N_G}{T_s-\tau}}-1.
\end{align}
We consider the extreme case where $\rho^2_{gl}=\bar{\rho}^{[{min}]^2}_{g}$ and $ \rho^2_{gb} =\bar{\rho}^{[{max}]^2}_{{g}}, \;\forall b\neq l$. 	Here $\bar{\rho}^{[{min}]}_{g}$ and $\bar{\rho}^{[{max}]}_{{g}}$ denote respectively the minimum and maximum channel autocorrelation  coefficients in  group $g$. This means that we assume the worst case scenario for each user.
Finally, by considering $SINR^{[\infty]}_{g,l}= \frac{\beta^{[l]^2}_{gl} }{\sum_{ b\neq l}^{C}  \beta^{[l]^2}_{gb} }$, we obtain $(14)$ which finishes the proof. 	
\subsection{Proof of Theorem $3$}\label{appendix:theorem4}
We start  by  demonstrating that the objective function  of  problem~\eqref{eq:finite_combinatorial}, is submodular. We note that the sum of  submodular functions is submodular. Consequently,  it is enough  to prove the  submodularity  of $f_g$ for a given copilot group $g$,  where $f_g$ is given by 
\begin{align}\label{eq:finite_horizon_training_scheduling_combinatorial}
f_g( \vec a(t_{0}),\ldots, \vec a(t_{H-1}), x,\vec u )=    \sum_{t=t_{0}}^{t_{H-1}}\sum_{l=1}^C  \big(1-\frac{1}{T_s}\sum_{i=1}^{N_G} a_i(t)\big)\log\big(1+\hbox{SINR}_{gl}^{MRC}(\vec d(t), x,\vec u)\big)  , 
\end{align}
We consider two  sets of action vectors, $\{\vec a(t) \in A , , t=t_{0},\ldots,t_{H-1} \}$ and $\{\vec a'(t) \in A , t=t_{0},\ldots,t_{H-1} \}$ such that, $\forall t=t_{0},\ldots,t_{H-1},\; \sum_{i=1}^{N_G} a_i(t) \leq \sum_{i=1}^{N_G} a'_i(t), \text{and}\; \forall i=1,\ldots, N_G,\; a_i(h)=1 \Rightarrow a'_i(h)=1$. These two sets of action vectors result, respectively,  in two sets of  delay vectors $\{\vec d (t), t=t_{0},\ldots,t_{H-1} \}$ and $\{\vec d' (t), t=t_{0},\ldots,t_{H-1} \}$ that can be obtained from $\vec a(t) $ and $\vec a'(t) $ according to (\ref{eq:delay_function}).
In order to prove the submodularity  of $f_g$, we need to prove that, for a given $h$ and $j$ such  that  $a_j(h)=a'_j(h) =0 $,     the marginal values of setting  $a_j(h)=1$ is higher than that of  $a'_j(h)=1 $ ,i.e  
\begin{align}
& f_g( \vec a(t_{0}),\ldots, \vec a(h) \oplus a_j(h),\ldots,\vec a(t_{H-1}), x,\vec u )-f_g( \vec a(t_{0}),\ldots, \vec a(h),\ldots,\vec a(t_{H-1}), x,\vec u )\geq\\
&	f_g( \vec a'(t_{0}),\ldots, \vec a'(h) \oplus a'_j(h),\vec a'(t_{H-1}), x,\vec u )-f_g( \vec a'(t_{0}),\ldots, \vec a'(h),\ldots, \vec a'(t_{H-1}), x,\vec u ).\nonumber
\end{align}	   
We will  distinguish between two cases, $j=g$ and $j\neq g$.  For the  first case, where $j=g$, the difference between the two marginal  values is given by:
\begin{align}
& \nonumber \Lambda - \Lambda'= \sum_{l=1}^C  
\log\big(1+\hbox{SINR}_{gl}^{MRC}(0, x,\vec u)\big)\big(\frac{\sum_{i=1}^{N_G} a'_i(h)-\sum_{i=1}^{N_G} a_i(h)}{T_s}\big)+ \big(1-\frac{\sum_{i=1}^{N_G} a'_i(h)}{T_s}\big)\\\nonumber
\nonumber	&\log\big(1+\hbox{SINR}_{gl}^{MRC}(d'_g(h), x,\vec u)\big) -\big(1-\frac{\sum_{i=1}^{N_G} a_i(h)}{T_s}\big)\log\big(1+\hbox{SINR}_{gl}^{MRC}(d_g(h), x,\vec u)\big)\\\nonumber
\nonumber	&+\sum_{t=h+1}^{t_{H-1}}\sum_{l=1}^C\big(1-\frac{\sum_{i=1}^{N_G} a_i(t)}{T_s}\big)\log\big(\frac{1+\hbox{SINR}_{gl}^{MRC}(d_g(t)-1, x,\vec u)}{1+\hbox{SINR}_{gl}^{MRC}( d_g(t), x,\vec u)}\big) -\big(1-\frac{\sum_{i=1}^{N_G} a'_i(t)}{T_s}\big)\\
&\log\big(\frac{1+\hbox{SINR}_{gl}^{MRC}(d'_g(t)-1, x,\vec u)}{1+\hbox{SINR}_{gl}^{MRC}( d'_g(t), x,\vec u)}\big)
\end{align}		
The difference  in  marginal values is positive as $\log\big(\frac{1+\hbox{SINR}_{gl}^{MRC}(d_g(t)-1, x,\vec u)}{1+\hbox{SINR}_{gl}^{MRC}( d_g(t), x,\vec u)}\big) $ is  decreasing as a function of  $d_g(t)$. We, now, consider  the case where $j \neq g$. In this case, we have
\begin{align}
& \Lambda - \Lambda'= \sum_{l=1}^C \frac{1}{T_s}\log\big(  \frac{1+\hbox{SINR}_{gl}^{MRC}(d'_g(h), x,\vec u)}{1+\hbox{SINR}_{gl}^{MRC}(d_g(h), x,\vec u)}\big)
\end{align}	
From the definition of $\vec a(h)$ and  $\vec a'(h)$, we have $\sum_{i=1}^{N_G} a_i(h) \leq \sum_{i=1}^{N_G} a'_i(h)$.\\ Hence $\hbox{SINR}_{gl}^{MRC}(d'_g(h), x,\vec u) \geq \hbox{SINR}_{gl}^{MRC}(d_g(h), x,\vec u)$ and the difference  in  marginal values  is also positive, in this case. 
Consequently, $f_g$ is submodular.  Concerning the matroid constraints, let us consider the ground set $G=\{v_{1t_1}, \ldots, v_{N_Gt_1},\ldots,v_{1t_{H-1}},\ldots, v_{N_Gt_{H-1}}\} $, where each element $v_{gt}$ represents the scheduling of copilot group $g$ for training at slot $t$.  It is clear that the constraints $(35)$  form a partition matroid on  $G$ \cite{matroid}. Consequently, problem~\eqref{eq:finite_combinatorial} is a  maximization of a submodular function  subject to matroid constraints.

\end{document}